\theoremstyle{definition}
\newtheorem{dfn}{Definition}
\theoremstyle{plain}
\newtheorem{thm}{Theorem}
\theoremstyle{plain}
\newtheorem{cor}{Corollary}
\theoremstyle{plain}
\newtheorem{lem}{Lemma}
\theoremstyle{definition}
\newtheorem{ex}{Example}
\theoremstyle{definition}
\newtheorem*{rem}{Remark}
\DeclareMathOperator*{\argmax}{arg\,max}
\DeclareMathOperator*{\argsup}{arg\,sup}
\begin{document}
%
\title{Mismatched Decoding Reliability\\Function at Zero Rate}
%
%
%

\author{Marco~Bondaschi,~\IEEEmembership{Graduate~Student~Member,~IEEE,}
        Albert~Guill\'en~i~F\`abregas,~\IEEEmembership{Fellow,~IEEE,}\\
        and~Marco~Dalai,~\IEEEmembership{Senior~Member,~IEEE}
\thanks{M. Bondaschi is with the School of Computer and Communication Sciences, \'{E}cole Polytechnique F\'{e}d\'{e}rale de Lausanne, CH-1015 Lausanne, Switzerland (e-mail: marco.bondaschi@epfl.ch).}%
\thanks{A. Guill\'en i F\`abregas is with the Department of Engineering, University of Cambridge, Cambridge CB2 1PZ, U.K., and with the Department of Information and Communication Technologies, Universitat Pompeu Fabra, Barcelona 08018, Spain (e-mail: guillen@ieee.org).}
\thanks{M. Dalai is with the Department of Information Engineering at the University of Brescia, Via Branze 38 I-25123 Brescia, Italy (e-mail: marco.dalai@unibs.it).}%
\thanks{This work was supported in part by the European Research Council under Grant 725411.}%
\thanks{This research was partially supported by Italian Ministry of Education under Grant PRIN 2015 D72F16000790001.}%
\thanks{This work was presented in part at the 2021 IEEE International Symposium on Information Theory, Melbourne, Australia, Jul. 2021.}
\thanks{Copyright (c) 2021 IEEE. Personal use of this material is permitted.  However, permission to use this material for any other purposes must be obtained from the IEEE by sending a request to pubs-permissions@ieee.org.}%
}

\maketitle

\begin{abstract}
We derive an upper bound on the reliability function of mismatched decoding for zero-rate codes. The bound is based on a result by Koml\'os that shows the existence of a subcode with certain symmetry properties. The bound is shown to coincide with the expurgated exponent at rate zero for a broad family of channel-decoding metric pairs.
\end{abstract}

\begin{IEEEkeywords}
Error exponents, mismatched decoding.
\end{IEEEkeywords}

%
\IEEEpeerreviewmaketitle

\section{Introduction}
%
%
%
%
Consider a discrete memoryless channel with finite input alphabet $\mathcal{X}$ and output alphabet $\mathcal{Y}$, and with transition probabilities $W(y|x)$. For a message set $\mathcal{M}=\{1,2,\ldots,M\}$ and blocklength $n$, an encoder is a function $\mathcal{C} : \mathcal{M}  \to \mathcal{X}^n$ that assigns to each message $m$ a corresponding codeword $\boldsymbol{x}_m=(x_{m,1},x_{m,2},\ldots,x_{m,n})$. The \emph{rate of transmission} is defined as
\begin{equation}
R\triangleq \frac{\log M}{n}\,.
\end{equation}
When message $m$ is sent, an output sequence $\boldsymbol{y}=(y_1,y_2,\ldots, y_n)$ is received with probability 
\begin{equation}
W^n(\boldsymbol{y}|\boldsymbol{x}_m) = \prod_{i=1}^n W(y_i|x_{m,i})\,.
\end{equation}
A decoder is a function $\mathcal{C}^{-1}: \mathcal{Y}^n \to \mathcal{M}$ whose task is to map each possible output sequence to a message in $\mathcal{M}$ which hopefully is equal to the message that was originally sent with high probability. In this paper, we consider a decoder that follows the rule
\begin{equation}
\label{Decoder}
\mathcal{C}^{-1}(\boldsymbol{y}) \in \{m \in \mathcal{M} : q^n(\boldsymbol{x}_m,\boldsymbol{y}) \geq q^n(\boldsymbol{x}_{m'},\boldsymbol{y}) \,\,\forall m' \in \mathcal{M}\}\,,
\end{equation}
where
\begin{equation}
q^n(\boldsymbol{x}_m,\boldsymbol{y}) = \prod_{i=1}^n q(x_{m,i}, y_i)
\end{equation}
for a fixed function $q: \mathcal{X} \times \mathcal{Y} \to \mathbb{R}^+$ called \emph{decoding metric}. We assume for now that when there is a tie, that is, when for a certain output sequence $\boldsymbol{y}$ the maximal $q^n(\boldsymbol{x}_m,\boldsymbol{y})$ is attained by more than one message, the decoder selects one of them with an arbitrary rule. However, as we will discuss in more detail later on, most of the results in this work are valid only for a decoder that breaks ties equiprobably among the messages that maximize $q^n(\boldsymbol{x}_m, \boldsymbol{y})$. We will distinguish this case from the general one when necessary.

When $q(x,y) = W(y|x)$, the decoder is the maximum likelihood decoder, achieving the lowest probability of error. Instead, when the decoding metric $q(x,y) \neq W(y|x)$, the decoder is, in general, said to be mismatched \cite{csiszar1, merhav1}. The mismatched decoding problem encompasses a number of important problems such as channel uncertainty, fading channels, reduced-complexity decoding, bit-interleaved coded modulation, optical communications, and zero-error and zero-undetected error capacity. See \cite{scarlett1} for a recent survey of the information theoretic foundations of mismatched decoding.

When message $m$ is sent, a decoding error occurs if $\mathcal{C}^{-1}(\boldsymbol{y}) \neq m$. The probability of this event is
\begin{equation}
\label{PemDef}
P_{e,m}^{(n)} \triangleq \sum_{\boldsymbol{y} \in \mathcal{Y}^c_m} W^n(\boldsymbol{y}|\boldsymbol{x}_m)\,,
\end{equation}
where $\mathcal{Y}_m \subset \mathcal{Y}^n$ is the subset of output sequences that are decoded to $m$. The average probability of error of the code is 
\begin{equation}
P_{e}^{(n)} \triangleq \frac{1}{M} \sum_{m=1}^M P_{e,m}^{(n)}\,.
\end{equation}

For fixed $R$, $n$ and decoding metric $q$, let $P_e^q(R,n)$ be the smallest probability of error for mismatched decoding over all codes with rate at least $R$ and block length $n$, when $q$ is used as the decoding metric. The mismatched reliability function is defined as 
\begin{equation}
E^q(R) \triangleq \limsup_{n \to \infty} -\frac{\log P_e^q(R,n)}{n}
\end{equation}
and represents the asymptotic exponent with which the probability of error goes to zero for a given channel and decoding metric, when an optimal code with blocklength $n$ and rate at least $R$ is used. The supremum of the information rates $R$ for which the error probability tends to zero is called \emph{mismatched capacity}.

In general, there is no single-letter expression for the mismatched capacity. A number of achievable rate and error exponent results based on random coding are available in the literature \cite{csiszar1, merhav1, fischer1, kaplan1, somekh1}. In terms of upper bounds on the mismatched capacity or on the reliability function, there are fewer results in the literature. Recently, single-letter upper bounds on the mismatched capacity improving over the Shannon capacity were proposed in \cite{kangarshahi1,somekh2}. A sphere-packing upper bound on the mismatched reliability function was recently derived in \cite{kangarshahi2}, yielding an improved upper bound on the mismatched capacity.

In this paper, we study the problem of finding an upper bound on the mismatched reliability function of any given discrete memoryless channel and decoding metric, when the rate tends to $0$, that is, we are interested in upper-bounding $E^q(0^+)$.
In the following, we  focus only on decoding metrics that are meaningful for our problem. In particular, we restrict our attention to decoding metrics such that
\begin{equation}
\label{MetricCondition1}
W(y|x) > 0 \implies q(x,y)>0
\end{equation}
for all $x \in \mathcal{X}$ and $y \in \mathcal{Y}$. In fact, channels with a decoding metric that does not meet this condition for some input $x$ have a mismatched capacity equal to $0$ if that input is used \cite{csiszar1}, and so they are of little interest.

As we already mentioned, several lower bounds on the mismatched reliability function exist. The one that is most relevant for this work is a generalization of Gallager's classical expurgated bound to the case of mismatched decoding obtained by Scarlett \emph{et al.} \!\cite{scarlett2}; when the rate approaches zero, their bound takes the form
\begin{align}
\label{ELower}
E^q(0^+) &\geq \notag\\
	&\hspace{-3.5em}\max_{Q \in \mathcal{P}(\mathcal{X})} \sup_{s \geq 0} -\!\!\sum_{a \in \mathcal{X}}\sum_{b \in \mathcal{X}}Q(a) Q(b)\log\sum_{y \in \mathcal{Y}} W(y|a) \left(\frac{q(b,y)}{q(a,y)}\right)^{\!\!s}.
\end{align}

In the following, we  derive an upper bound on $E^q(0^+)$ for a wide class of channels and decoding metrics, under the assumption that ties are broken equiprobably. Such an upper bound will turn out to be equal to the lower bound \eqref{ELower}; therefore, for such a class of channels, the bound \eqref{ELower} is tight.

In order to prove our bound, in Section II we study conditions that channels and decoding metrics must satisfy in order to have a finite reliability function at rate $R=0^+$. Then, in Section III we derive a lower bound on the mismatched probability of error for two codewords, and in section IV we prove the tightness of \eqref{ELower} using the lower bound of Section III and a probabilistic result by Koml\'{o}s (obtained using Ramsey theory) on the existence of a subset of random variables from a larger set that (asymptotically) have pairwise symmetric distributions. The application of these ideas in coding theory originated in works of Blinovsky, see for example \cite{blinovsky1,blinovsky2,ahlswede1}. See also \cite{bondaschi1} for a recent revisitation of the maximum likelihood case.

\section{Mismatched zero-error capacity}

In the following we assume that condition \eqref{MetricCondition1} is satisfied. It is also useful to restrict our attention only to channels and decoding metrics such that at all rates $R>0$ the minimum probability of error is strictly positive; if this is not the case, then at $R=0^+$ the reliability function is infinite and no finite upper bound is possible. Thus, we introduce a new quantity, the mismatched zero-error capacity $C^q_0$ for a channel $W(y|x)$ and a decoding metric $q(x,y)$, defined as the supremum of the rates $R$ for which there exist codes with probability of error exactly equal to zero. If $C_0^q$ is positive for some channel and decoding metric, then the reliability function at $R=0^+$ is infinite. Hence, we would like to restrict our attention to channels and decoding metrics with $C_0^q = 0$.

We now proceed to analyze more closely the conditions for a positive mismatched zero-error capacity. Notice that $C_0^q$ is positive if and only if there exist two codewords $\boldsymbol{x}_1$ and $\boldsymbol{x}_2$ (of arbitrary blocklength) such that for all output sequences $\boldsymbol{y}$:
\begin{enumerate}
\item either $W^n(\boldsymbol{y}|\boldsymbol{x}_1)=0$ or $W^n(\boldsymbol{y}|\boldsymbol{x}_2)=0$;
\item $W^n(\boldsymbol{y}|\boldsymbol{x}_1) > 0 \implies q^n(\boldsymbol{x}_1,\boldsymbol{y}) \geq q^n(\boldsymbol{x}_2,\boldsymbol{y})$\\
$W^n(\boldsymbol{y}|\boldsymbol{x}_2) > 0 \implies q^n(\boldsymbol{x}_2,\boldsymbol{y}) \geq q^n(\boldsymbol{x}_1,\boldsymbol{y}).$
\end{enumerate}
Condition 1 states that each possible output sequence can be obtained only from one of the two codewords; condition 2 states that each sequence is always decoded correctly. Notice that up to now we are still assuming that ties can be decoded with an arbitrary rule. That is why Condition 2 admits cases such that $q^n(\boldsymbol{x}_1,\boldsymbol{y}) = q^n(\boldsymbol{x}_2,\boldsymbol{y})$: there always exists a tie-breaking rule that decodes each output sequence correctly in these cases (the one that associates to each $\boldsymbol{y}$ the only $\boldsymbol{x}$ with $W^n(\boldsymbol{y}|\boldsymbol{x}) > 0$).

As we stated earlier on, our upper bound on $E^q(0^+)$ only holds for a decoder that breaks ties equiprobably. Therefore, even if the definition of mismatched zero-error capacity given above is the most general, since it admits any decoding strategy for breaking ties, it is nonetheless meaningful to us to introduce a second definition of mismatched zero-error capacity, that we denote by $\bar{C}_0^q$, that is the supremum of the rates $R$ for which there exist codes with probability of error exactly equal to zero, given that ties are broken equiprobably. Since this decoding strategy is not necessarily the best one, that is, the one that achieves the minimum probability of error, it follows that, in general, $\bar{C}_0^q \leq C_0^q$.

As for the conditions stated above, the only difference, when $\bar{C}_0^q$ is considered instead of $C_0^q$, is that condition 2 becomes:
\begin{enumerate}[label=\arabic*b)]
\stepcounter{enumi}
\item $W^n(\boldsymbol{y}|\boldsymbol{x}_1) > 0 \implies q^n(\boldsymbol{x}_1,\boldsymbol{y}) > q^n(\boldsymbol{x}_2,\boldsymbol{y})$\\
$W^n(\boldsymbol{y}|\boldsymbol{x}_2) > 0 \implies q^n(\boldsymbol{x}_2,\boldsymbol{y}) > q^n(\boldsymbol{x}_1,\boldsymbol{y})$
\end{enumerate}
since in this case ties are not allowed, given that, with ties broken equiprobably, there is always a positive probability of decoding the output sequence incorrectly.

Finally, when the chosen decoding metric is the maximum-likelihood one, that is, $q(x,y) = W(y|x)$, the two zero-error capacities introduced above are equal and coincide with the classical zero-error capacity $C_0$; also, since the maximum-likelihood decoding metric is the one that minimizes the probability of error, we have that, for any decoding metric, $\bar{C}_0^q \leq C_0^q \leq C_0$. 

The next objective of this section is to find conditions for the mismatched zero-error capacity to be zero that depend only on the single-letter channel probabilities $W(y|x)$ and decoding metric $q(x,y)$. This can be done using the same tools that we will need to study $E^q(0^+)$. Therefore, we introduce now a real-valued function that will be useful to both ends. For any two sequences $\boldsymbol{x}_1$ and $\boldsymbol{x}_2$ in $\mathcal{X}^n$, we define
\begin{equation}
\mu_{\boldsymbol{x}_1,\boldsymbol{x}_2}(s) \triangleq -\log \sum_{\boldsymbol{y}\in \hat{\mathcal{Y}}^n_{\boldsymbol{x}_1,\boldsymbol{x}_2}} W^n(\boldsymbol{y}|\boldsymbol{x}_1) \bigg(\frac{q^n(\boldsymbol{x}_2,\boldsymbol{y})}{q^n(\boldsymbol{x}_1,\boldsymbol{y})}\bigg)^s\,,\label{MuDef}
\end{equation}
where
\begin{equation}
\hat{\mathcal{Y}}^n_{\boldsymbol{x}_1,\boldsymbol{x}_2} \triangleq \big\{\boldsymbol{y} \in\mathcal{Y}^n : q^n(\boldsymbol{x}_1,\boldsymbol{y}) q^n(\boldsymbol{x}_2,\boldsymbol{y}) > 0\big\}\,.
\end{equation}
When $n=1$, \eqref{MuDef} becomes, for any $a,b \in \mathcal{X}$,
\begin{equation}
\label{MuAB}
\mu_{a,b}(s) \triangleq -\log\sum_{y \in \hat{\mathcal{Y}}_{a,b}} W(y|a) \bigg(\frac{q(b,y)}{q(a,y)}\bigg)^s\,,
\end{equation}
with
\begin{equation}
\hat{\mathcal{Y}}_{a,b} \triangleq \big\{y \in \mathcal{Y} : q(a,y) q(b,y) > 0\big\}.
\end{equation}
An additional quantity that will be useful to us is the limit of the derivative of $\mu_{\boldsymbol{x}_1,\boldsymbol{x}_2}(s)$ when $s \to \infty$; for this reason, we introduce a compact symbol for it:
\begin{gather}
\mu'_{\boldsymbol{x}_1,\boldsymbol{x}_2} \triangleq \lim_{s \to\infty} \mu'_{\boldsymbol{x}_1,\boldsymbol{x}_2}(s) \label{MuPrimeX} \\
\mu'_{a,b} \triangleq \lim_{s \to\infty} \mu'_{a,b}(s) \label{MuPrimeAB}
\end{gather}
and we set by definition $\mu'_{\boldsymbol{x}_1,\boldsymbol{x}_2} = +\infty$ if $\mu_{\boldsymbol{x}_1,\boldsymbol{x}_2}(s) = +\infty$, and the same for $\mu'_{a,b}$.\footnote{Throughout the paper we use the convention $\frac{\cdot}{0} = +\infty$.}
\begin{lem}
\label{MuEtaProperties}
The following properties hold for all $a,b \in \mathcal{X}$ and all sequences $\boldsymbol{x}_1, \boldsymbol{x}_2$ in $\mathcal{X}^n$.
\begin{enumerate}
\item Let $P_{\boldsymbol{x}_1,\boldsymbol{x}_2}$ be the joint type of $\boldsymbol{x}_1$ and $\boldsymbol{x}_2$. Then,
\begin{equation}
\frac{1}{n}\,\mu_{\boldsymbol{x}_1,\boldsymbol{x}_2}(s) = \sum_{a \in \mathcal{X}} \sum_{b \in \mathcal{X}} P_{\boldsymbol{x}_1, \boldsymbol{x}_2}(a,b) \mu_{a,b}(s)\,. \label{MuType}
\end{equation}
\item $\mu_{a,b}(s)$ and $\mu_{\boldsymbol{x}_1,\boldsymbol{x}_2}(s)$ are concave.
\item
\begin{equation}
\mu_{a,a}(s) = 0\,.
\end{equation}
\item
\begin{equation}
\mu_{a,b}' = \min_{y: W(y|a)>0} \log \frac{q(a,y)}{q(b,y)}\,. \label{MuDer}
\end{equation}
\end{enumerate}
\end{lem}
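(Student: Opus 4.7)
The plan is to prove the four items in sequence, exploiting the product structure of $W^n$ and $q^n$ to reduce each one to a single-letter computation.

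For (1), I would note that both $W^n(\boldsymbol{y}|\boldsymbol{x}_1)$ and $q^n(\boldsymbol{x}_2,\boldsymbol{y})/q^n(\boldsymbol{x}_1,\boldsymbol{y})$ factorize over coordinates, and that the constraint $\boldsymbol{y}\in\hat{\mathcal{Y}}^n_{\boldsymbol{x}_1,\boldsymbol{x}_2}$ decouples as $y_i\in\hat{\mathcal{Y}}_{x_{1,i},x_{2,i}}$ for every $i$. Hence the sum in \eqref{MuDef} factorizes into a product, $-\log$ turns it into a sum over positions, and collecting identical terms via the joint type $P_{\boldsymbol{x}_1,\boldsymbol{x}_2}$ yields \eqref{MuType} after dividing by $n$. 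For (2), I would write the inner sum in \eqref{MuAB} as $\sum_y W(y|a)\,e^{s\log(q(b,y)/q(a,y))}$ and recognize it as an unnormalized moment generating function, whose logarithm is convex in $s$ (a one-line H\"older-inequality argument that works for any non-negative weights); hence $\mu_{a,b}(s)$ is concave, and concavity of $\mu_{\boldsymbol{x}_1,\boldsymbol{x}_2}(s)$ then follows from (1) as a non-negative linear combination of concave functions. Item (3) is immediate: setting $b=a$ makes every ratio equal to $1$ on $\hat{\mathcal{Y}}_{a,a}=\{y:q(a,y)>0\}$, and \eqref{MetricCondition1} guarantees that this set carries the full mass $W(\cdot|a)$, so the inner sum equals $1$.

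The only genuine work is (4). I would compute the derivative directly,
\[
\mu'_{a,b}(s)=-\frac{\sum_{y\in\hat{\mathcal{Y}}_{a,b}} W(y|a)\,r_y^{s}\log r_y}{\sum_{y\in\hat{\mathcal{Y}}_{a,b}} W(y|a)\,r_y^{s}}\,,
\]
where $r_y=q(b,y)/q(a,y)$, and study $s\to\infty$ by factoring out $(r^\star)^s$, with $r^\star=\max\{r_y:y\in\hat{\mathcal{Y}}_{a,b},\,W(y|a)>0\}$: only the terms attaining $r^\star$ survive in both sums, giving $\mu'_{a,b}(s)\to -\log r^\star=\min_y\log(q(a,y)/q(b,y))$, the min being over $y$ with $W(y|a)>0$ and $q(b,y)>0$. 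To match \eqref{MuDer} as stated, one invokes the convention $\frac{\cdot}{0}=+\infty$: any $y$ with $W(y|a)>0$ but $q(b,y)=0$ contributes $+\infty$ to the min and so cannot alter it, unless all such $y$ satisfy $q(b,y)=0$, in which case the inner sum in \eqref{MuAB} is empty, $\mu_{a,b}(s)=+\infty$, and $\mu'_{a,b}=+\infty$ by definition, again matching the formula. This small bookkeeping about zeros of $q(b,y)$ is the only delicate point in the proof.
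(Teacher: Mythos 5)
Your proposal is correct and follows essentially the same route as the paper: coordinate-wise additivity plus grouping by joint type for (1), log-convexity of the inner sum for (2), condition \eqref{MetricCondition1} for (3), and a direct limit of the derivative dominated by the largest base $\max_{y:W(y|a)>0} q(b,y)/q(a,y)$ for (4), including the same treatment of the degenerate case $\{y: W(y|a)q(b,y)>0\}=\varnothing$ via the convention $\tfrac{\cdot}{0}=+\infty$. The only cosmetic difference is that the paper proves (2) by exhibiting $-\mu''_{a,b}(s)$ as the variance of a random variable under a tilted distribution rather than by H\"older's inequality; the two arguments are interchangeable.
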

\begin{proof}
To prove property 1, notice that $\mu_{\boldsymbol{x}_1,\boldsymbol{x}_2}(s)$ is additive, in the sense that it can be rewritten coordinate-by-coordinate as
\begin{equation}
\label{MuAdditive}
\mu_{\boldsymbol{x}_1,\boldsymbol{x}_2}(s) = \sum_{c=1}^n \mu_c(s)\,,
\end{equation}
where
\begin{equation}
\mu_c(s) \triangleq -\log\sum_{y \in \hat{\mathcal{Y}}_c} W(y|x_{1,c}) \bigg(\frac{q(x_{2,c},y)}{q(x_{1,c},y)}\bigg)^s
\end{equation}
and 
\begin{equation}
\hat{\mathcal{Y}}_c \triangleq \big\{y \in \mathcal{Y} : q(x_{1,c},y) q(x_{2,c},y) > 0\big\}\,.
\end{equation}
Each term of the sum in \eqref{MuAdditive} only depends on the pair of input symbols $(x_{1,c}, x_{2,c})$. Since every pair $(a,b) \in \mathcal{X}^2$ appears in $n P_{\boldsymbol{x}_1,\boldsymbol{x}_2}(a,b)$ coordinates, grouping together the equal terms in \eqref{MuAdditive} leads to \eqref{MuType}.

Property 2 can be proved by computing the first and second derivatives of $\mu_{a,b}(s)$, that is
\begin{equation}
\label{MuFirstDer}
\mu_{a,b}'(s) = -\frac{\sum_{y\in \hat{\mathcal{Y}}_{a,b}} W(y|a) \big(\frac{q(b,y)}{q(a,y)}\big)^s \log \frac{q(b,y)}{q(a,y)}}{\sum_{\bar{y}\in \hat{\mathcal{Y}}_{a,b}} W(\bar{y}|a) \big(\frac{q(b,\bar{y})}{q(a,\bar{y})}\big)^s}
\end{equation}
and
\begin{multline}
\mu_{a,b}''(s) = -\frac{\sum_{y\in \hat{\mathcal{Y}}_{a,b}} W(y|a) \big(\frac{q(b,y)}{q(a,y)}\big)^s \big(\log \frac{q(b,y)}{q(a,y)}\big)^2}{\sum_{\bar{y}\in \hat{\mathcal{Y}}_{a,b}} W(\bar{y}|a) \big(\frac{q(b,\bar{y})}{q(a,\bar{y})}\big)^s} \\+ \big[\mu_{a,b}'(s)\big]^2.
\end{multline}
Now, for any $s \geq 0$, these two quantities can be seen respectively as the expected value and the variance (with a minus sign) of a random variable; in fact, define the following probability distribution over the set of sequences $\hat{\mathcal{Y}}_{a,b}$,
\begin{equation}
Q_s(y) \triangleq \frac{W(y|a) \big(\frac{q(b,y)}{q(a,y)}\big)^s}{\sum_{\bar{y}\in \hat{\mathcal{Y}}_{a,b}} W(\bar{y}|a) \big(\frac{q(b,\bar{y})}{q(a,\bar{y})}\big)^s}\,,
\end{equation}
and define the random variable
\begin{equation}
D(y) \triangleq -\log \frac{q(b,y)}{q(a,y)}\,.
\end{equation}
Then, we can rewrite the two derivatives as
\begin{equation}
\mu_{a,b}'(s) = \mathbb{E}_{Q_s}[D] = \sum_{y\in \hat{\mathcal{Y}}_{a,b}} Q_s(y) D(y)
\end{equation}
\begin{align}
\mu_{a,b}''(s) &= -\text{Var}_{Q_s}[D] = -\mathbb{E}_{Q_s}\big[D^2\big] + \big(\mathbb{E}_{Q_s}[D]\big)^2 \\
	&= -\sum_{y\in \hat{\mathcal{Y}}_{a,b}} Q_s(y) D^2(y) + \big[\mu_{a,b}'(s)\big]^2.
\end{align}
Since the variance of a random variable is always non-negative, it follows that $\mu_{a,b}''(s) \leq 0$ and therefore that $\mu_{a,b}(s)$ is concave. By property 1, $\mu_{\boldsymbol{x}_1,\boldsymbol{x}_2}(s)$ is also concave, since it can be rewritten as a (weighted) sum of concave functions.

Property 3 follows from the fact that, from definition \eqref{MuAB},
\begin{equation}
\mu_{a,a}(s) = -\log \sum_{y \in \hat{\mathcal{Y}}_{a,a}} W(y|a) = 0\,,
\end{equation}
where the last equality follows from \eqref{MetricCondition1}, since
\begin{equation}
\big\{y \in \mathcal{Y}: W(y|a) > 0\big\} \subset \big\{y \in \mathcal{Y}: q(a,y) > 0\big\} = \hat{\mathcal{Y}}_{a,a}\,.
\end{equation}

To prove property 4, first notice that $\mu_{a,b}(s) = + \infty$ if and only if 
\begin{equation}
\label{MuInfinity}
\big\{y \in \mathcal{Y} : W(y|a) q(b,y) > 0\big\} = \varnothing\,.
\end{equation}
In such a case, the right-hand side of \eqref{MuDer} equals $+\infty$, in accordance to what we set by definition. If instead the set on the left-hand side of \eqref{MuInfinity} is not empty, then the property follows directly by taking the limit $s \to +\infty$ of the right-hand side of \eqref{MuFirstDer}, since both numerator and denominator are dominated by the exponentials with the largest base, which is equal to
\begin{equation*}
\max_{y: W(y|a)>0} \frac{q(b,y)}{q(a,y)}\,. \qedhere
\end{equation*}
\end{proof}

We are now ready to prove the following theorem on the mismatched zero-error capacities $C_0^q$ and $\bar{C}_0^q$.

\begin{thm}
\label{C0Thm}
For any given discrete memoryless channel $W(y|x)$ and decoding metric $q(x,y)$:
\begin{enumerate}
\item $C_0^q = 0$ if and only if
\begin{equation}
\label{C0Cond1}
\min_{y : W(y|a) > 0} \frac{q(a,y)}{q(b,y)} \leq \max_{y: W(y|b)>0} \frac{q(a,y)}{q(b,y)}
\end{equation}
for all $a,b \in \mathcal{X}$, and for all $a,b \in \mathcal{X}$ such that
\begin{equation}
\min_{y : W(y|a) > 0} \frac{q(a,y)}{q(b,y)} = \max_{y: W(y|b)>0} \frac{q(a,y)}{q(b,y)}
\end{equation}
there exists some $y \in \mathcal{Y}$ such that
\begin{equation}
\label{C0Cond2}
W(y|a) W(y|b) > 0\,.
\end{equation}
\item $\bar{C}_0^q = 0$ if and only if
\begin{equation}
\label{C0BarCond}
\min_{y : W(y|a) > 0} \frac{q(a,y)}{q(b,y)} \leq \max_{y: W(y|b)>0} \frac{q(a,y)}{q(b,y)}
\end{equation}
for all $a,b \in \mathcal{X}$.
\end{enumerate}
\end{thm}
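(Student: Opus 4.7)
The plan is to recast the existence of two ``zero-error codewords'' $\boldsymbol{x}_1,\boldsymbol{x}_2$ satisfying conditions~1 and~2 (or~2b) as a condition on their joint type by using $\mu'_{\boldsymbol{x}_1,\boldsymbol{x}_2}$ as a bridge. Repeating the proof of Property~4 coordinatewise (or, equivalently, combining Properties~1 and~4) gives
\[
\mu'_{\boldsymbol{x}_1,\boldsymbol{x}_2}=\min_{\boldsymbol{y}:W^n(\boldsymbol{y}|\boldsymbol{x}_1)>0}\log\frac{q^n(\boldsymbol{x}_1,\boldsymbol{y})}{q^n(\boldsymbol{x}_2,\boldsymbol{y})}\,,
\]
so that condition~2 is exactly $\mu'_{\boldsymbol{x}_1,\boldsymbol{x}_2}\ge 0$ together with $\mu'_{\boldsymbol{x}_2,\boldsymbol{x}_1}\ge 0$, while 2b is the same statement with strict inequalities. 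Property~1 also yields the companion identity
\[
\mu'_{\boldsymbol{x}_1,\boldsymbol{x}_2}+\mu'_{\boldsymbol{x}_2,\boldsymbol{x}_1}=n\!\sum_{a,b}P_{\boldsymbol{x}_1,\boldsymbol{x}_2}(a,b)\bigl(\mu'_{a,b}+\mu'_{b,a}\bigr)\,.
\]
Finally, Property~4 at the single-letter level implies that $\mu'_{a,b}+\mu'_{b,a}>0$ forces $W(\cdot|a)$ and $W(\cdot|b)$ to have disjoint supports, since any common $y$ would give both $\mu'_{a,b}\le\log\{q(a,y)/q(b,y)\}$ and $-\mu'_{b,a}\ge\log\{q(a,y)/q(b,y)\}$.

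Part~2 is now straightforward. \emph{Sufficiency}: if \eqref{C0BarCond} fails at some $a,b$, the disjoint-support remark lets $\boldsymbol{x}_1=(a,b)$, $\boldsymbol{x}_2=(b,a)$ satisfy condition~1, and the identity at length two gives $\mu'_{\boldsymbol{x}_1,\boldsymbol{x}_2}=\mu'_{\boldsymbol{x}_2,\boldsymbol{x}_1}=\mu'_{a,b}+\mu'_{b,a}>0$, i.e.\ 2b, hence $\bar{C}_0^q>0$. \emph{Necessity}: if $\bar{C}_0^q>0$, any $\boldsymbol{x}_1,\boldsymbol{x}_2$ satisfying 2b gives $\mu'_{\boldsymbol{x}_1,\boldsymbol{x}_2}+\mu'_{\boldsymbol{x}_2,\boldsymbol{x}_1}>0$, and the identity then produces a pair $(a,b)$ with $P_{\boldsymbol{x}_1,\boldsymbol{x}_2}(a,b)>0$ and $\mu'_{a,b}+\mu'_{b,a}>0$, contradicting \eqref{C0BarCond}.

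Part~1 follows the same skeleton, with the equality case handled separately. For \emph{sufficiency}, the strict-inequality case is handled as in part~2; if instead some pair has $\mu'_{a,b}+\mu'_{b,a}=0$ together with disjoint supports, then $\boldsymbol{x}_1=(a,b),\boldsymbol{x}_2=(b,a)$ still satisfies condition~1 and now yields $\mu'_{\boldsymbol{x}_1,\boldsymbol{x}_2}=\mu'_{\boldsymbol{x}_2,\boldsymbol{x}_1}=0$, i.e.\ condition~2 in non-strict form. For \emph{necessity}, suppose $C_0^q>0$ and, for contradiction, that both clauses of the theorem hold. Pick $\boldsymbol{x}_1,\boldsymbol{x}_2$ satisfying 1 and 2: condition~2 gives $\mu'_{\boldsymbol{x}_1,\boldsymbol{x}_2}+\mu'_{\boldsymbol{x}_2,\boldsymbol{x}_1}\ge 0$, while \eqref{C0Cond1} makes every summand in the identity non-positive, so every $(a,b)$ with $P_{\boldsymbol{x}_1,\boldsymbol{x}_2}(a,b)>0$ must satisfy $\mu'_{a,b}+\mu'_{b,a}=0$. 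By the equality clause, each such pair admits some $y_{a,b}$ with $W(y_{a,b}|a)W(y_{a,b}|b)>0$; setting $y_i:=y_{x_{1,i},x_{2,i}}$ coordinatewise then produces a $\boldsymbol{y}$ with $W^n(\boldsymbol{y}|\boldsymbol{x}_1)W^n(\boldsymbol{y}|\boldsymbol{x}_2)>0$, contradicting condition~1. The main subtlety will be the bookkeeping of the $+\infty$ convention when some $\mu'_{a,b}$ diverges, ensuring that both the min representation of $\mu'_{\boldsymbol{x}_1,\boldsymbol{x}_2}$ and the length-two construction remain valid in those degenerate cases.
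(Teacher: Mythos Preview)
Your argument is correct and noticeably more direct than the paper's. The paper treats conditions~1 and~2 as separate constraints on the joint type of $(\boldsymbol{x}_1,\boldsymbol{x}_2)$: it introduces the set $\mathcal{A}=\{(a,b):W(y|a)W(y|b)=0\text{ for all }y\}$, reduces condition~2 to a max--min optimization $\sup_{P}\min\{\sum P(a,b)\mu'_{a,b},\sum P(a,b)\mu'_{b,a}\}$, argues by concavity that the optimizer may be taken symmetric (collapsing the min to the symmetrized sum $\sum P(a,b)(\mu'_{a,b}+\mu'_{b,a})$), and then separately perturbs the optimizer to place mass on $\mathcal{A}$ so as to satisfy condition~1 as well. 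Your route bypasses all of this. The single-letter observation that $\mu'_{a,b}+\mu'_{b,a}>0$ already forces $W(\cdot|a)$ and $W(\cdot|b)$ to have disjoint supports means condition~1 comes for free whenever condition~2b holds strictly, so no perturbation step is needed; and the explicit length-two pair $(\boldsymbol{x}_1,\boldsymbol{x}_2)=\bigl((a,b),(b,a)\bigr)$ replaces the paper's search over symmetric types. For the converse direction your coordinatewise construction of a bad $\boldsymbol{y}$ is again more concrete than the paper's argument. Two small remarks: your labels ``sufficiency'' and ``necessity'' are swapped relative to the usual convention (your ``sufficiency'' paragraph is the contrapositive of the necessity clause, and vice versa), and the $+\infty$ bookkeeping you flag is indeed benign because $\mu'_{a,b}\in\mathbb{R}\cup\{+\infty\}$ never takes the value $-\infty$ under assumption~\eqref{MetricCondition1}, so no indeterminate sums arise.
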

\begin{cor}
\begin{equation}
\bar{C}_0^q = 0 \implies \max_{Q \in \mathcal{P}(\mathcal{X})} \sup_{s \geq 0} \sum_a \sum_b Q(a) Q(b) \mu_{a,b}(s) < + \infty\,.
\end{equation}
\end{cor}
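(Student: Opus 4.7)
The plan is to translate the hypothesis $\bar{C}_0^q=0$ into a single-letter slope condition on the $\mu$-functions, combine it with a pointwise affine majorant coming from concavity, and then close with a pairing argument that forces the combined slope to be nonpositive. First, I would invoke Theorem~\ref{C0Thm}, item~2), together with property~4 of Lemma~\ref{MuEtaProperties} to rewrite \eqref{C0BarCond} in the equivalent form $\mu'_{a,b}+\mu'_{b,a}\leq 0$ for every pair $(a,b)\in\mathcal{X}^2$. The same hypothesis also rules out $\mu_{a,b}(s)=+\infty$: otherwise the set $\{y\colon W(y|a)q(b,y)>0\}$ would be empty, making the left-hand side of \eqref{C0BarCond} equal $+\infty$ while the right-hand side stays finite, contradicting the hypothesis. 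Hence each $\mu_{a,b}(s)$ is real-valued (and concave, by property~2), and each $\mu'_{a,b}$ is a finite real number.

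Second, I would produce an affine upper bound $\mu_{a,b}(s)\leq s\,\mu'_{a,b}+L_{a,b}$, valid for all $s\geq 0$, with $L_{a,b}$ finite. Setting $r^{\ast}\triangleq\max_{y\colon W(y|a)q(b,y)>0}q(b,y)/q(a,y)$ so that $-\log r^{\ast}=\mu'_{a,b}$, one rewrites
\[
\mu_{a,b}(s)-s\mu'_{a,b}=-\log\sum_{y}W(y|a)\bigl(q(b,y)/(q(a,y)r^{\ast})\bigr)^{s},
\]
in which every base lies in $[0,1]$. Hence the inner summation is nonincreasing in $s$ and converges, by monotone convergence, to the positive number $\sum_{y\colon q(b,y)/q(a,y)=r^{\ast}}W(y|a)$; so $\mu_{a,b}(s)-s\mu'_{a,b}$ is nondecreasing in $s$ and bounded above by the finite constant $L_{a,b}\triangleq-\log\sum_{y\colon q(b,y)/q(a,y)=r^{\ast}}W(y|a)$.

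Third, I would multiply the affine majorant by $Q(a)Q(b)$ and sum, obtaining
\[
\sum_{a,b}Q(a)Q(b)\mu_{a,b}(s)\leq s\!\sum_{a,b}Q(a)Q(b)\mu'_{a,b}+\sum_{a,b}Q(a)Q(b)L_{a,b}.
\]
By property~3 of Lemma~\ref{MuEtaProperties} the diagonal terms of the first sum vanish, and grouping each off-diagonal pair $\{(a,b),(b,a)\}$ produces $Q(a)Q(b)(\mu'_{a,b}+\mu'_{b,a})\leq 0$, so the first sum is nonpositive. Since $s\geq 0$, the whole right-hand side is bounded by $\sum_{a,b}L_{a,b}$, a finite constant independent of both $Q$ and $s$. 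Taking $\sup_{s\geq 0}$ and then $\max_{Q\in\mathcal{P}(\mathcal{X})}$ yields the corollary.

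The main obstacle is producing the uniform-in-$s$ affine majorant with finite intercept: a naive concavity argument gives only that the asymptotic slope $\sum_{a,b}Q(a)Q(b)\mu'_{a,b}$ is nonpositive, which by itself does not bound the sum from above (for instance, $s\mapsto\sqrt{s}$ is concave with vanishing asymptotic slope yet unbounded). Once the affine majorant is in place, the pairing trick that turns $\mu'_{a,b}+\mu'_{b,a}\leq 0$ into a global bound is routine.
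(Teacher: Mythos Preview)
Your proof is correct. Both your argument and the paper's hinge on the same single-letter consequence of $\bar{C}_0^q=0$, namely $\mu'_{a,b}+\mu'_{b,a}\leq 0$ for all $a,b$, but the execution differs slightly. The paper first symmetrizes, writing $\sum_{a,b}Q(a)Q(b)\mu_{a,b}(s)=\tfrac{1}{2}\sum_{a,b}Q(a)Q(b)\bigl(\mu_{a,b}(s)+\mu_{b,a}(s)\bigr)$, then moves the $\sup_s$ inside the sum and argues that each $\sup_{s\geq 0}\bigl(\mu_{a,b}(s)+\mu_{b,a}(s)\bigr)$ is finite by combining concavity with the limit properties \eqref{MuPrimeProperty1}--\eqref{MuPrimeProperty3} already established in the proof of Theorem~\ref{C0Thm}. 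You instead bound each $\mu_{a,b}(s)$ individually by its explicit affine asymptote $s\,\mu'_{a,b}+L_{a,b}$ and only afterwards pair the slopes, using $\mu'_{a,b}+\mu'_{b,a}\leq 0$ to annihilate the linear term. Your route is a bit more self-contained (it does not appeal to the limit properties and also makes explicit the auxiliary step that no $\mu_{a,b}$ can be identically $+\infty$), and it yields an explicit uniform bound $\sum_{a,b}L_{a,b}$ rather than mere finiteness; the paper's route is shorter because it recycles machinery from the preceding theorem. One cosmetic remark: the vanishing of the diagonal slope terms follows more directly from property~4 of Lemma~\ref{MuEtaProperties} (which gives $\mu'_{a,a}=0$) than from property~3, though property~3 of course implies it too.
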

\begin{proof}
We first show that the quantities defined in \eqref{MuDef} and \eqref{MuPrimeX} --- and consequently also \eqref{MuAB} and \eqref{MuPrimeAB} --- satisfy the following properties, for every $\boldsymbol{x}_1$ and $\boldsymbol{x}_2$:
\begin{align}
\lim_{s \to +\infty} \mu_{\boldsymbol{x}_1,\boldsymbol{x}_2}(s) = +\infty &\iff \mu_{\boldsymbol{x}_1,\boldsymbol{x}_2}' > 0 \label{MuPrimeProperty1} \\
\lim_{s \to +\infty} \mu_{\boldsymbol{x}_1,\boldsymbol{x}_2}(s) \in [0,+\infty) &\iff \mu_{\boldsymbol{x}_1,\boldsymbol{x}_2}' = 0 \label{MuPrimeProperty2} \\
\lim_{s \to +\infty} \mu_{\boldsymbol{x}_1,\boldsymbol{x}_2}(s) = -\infty &\iff \mu_{\boldsymbol{x}_1,\boldsymbol{x}_2}' < 0\,. \label{MuPrimeProperty3}
\end{align}
In fact, consider the function
\begin{align}
f_{\boldsymbol{x}_1,\boldsymbol{x}_2}(s) &\triangleq e^{-\mu_{\boldsymbol{x}_1,\boldsymbol{x}_2}(s)} \notag\\
	&= \sum_{\boldsymbol{y} \in \hat{\mathcal{Y}}^n_{\boldsymbol{x}_1,\boldsymbol{x}_2}} W^n(\boldsymbol{y}|\boldsymbol{x}_1) \bigg(\frac{q^n(\boldsymbol{x}_2,\boldsymbol{y})}{q^n(\boldsymbol{x}_1,\boldsymbol{y})}\bigg)^s. \label{fAB}
\end{align}
Since $f_{\boldsymbol{x}_1,\boldsymbol{x}_2}(s)$ is the sum of non-negative quantities, when $s \to \infty$ only three alternatives are possible: $f_{\boldsymbol{x}_1,\boldsymbol{x}_2}(s)$ tends to infinity, $f_{\boldsymbol{x}_1,\boldsymbol{x}_2}(s)$ tends to a finite positive number, or $f_{\boldsymbol{x}_1,\boldsymbol{x}_2}(s)$ tends to zero. In the first case, $f_{\boldsymbol{x}_1,\boldsymbol{x}_2}(s) \to \infty$ (and consequently $\mu_{\boldsymbol{x}_1,\boldsymbol{x}_2}(s) \to -\infty$) if and only if at least one non-zero term of the sum goes to infinity, which in turn happens if and only if
\begin{equation}
\max_{\boldsymbol{y}: W^n(\boldsymbol{y}|\boldsymbol{x}_1)>0} \log \frac{q^n(\boldsymbol{x}_2,\boldsymbol{y})}{q^n(\boldsymbol{x}_1,\boldsymbol{y})} = - \mu'_{\boldsymbol{x}_1,\boldsymbol{x}_2} > 0\,.
\end{equation}
In the second case, $f_{\boldsymbol{x}_1,\boldsymbol{x}_2}(s)$ tends to a finite positive real number if and only if at least one term of the sum tends to a finite positive number and all the other terms tend to zero, which happens if and only if 
\begin{equation}
\max_{\boldsymbol{y}: W^n(\boldsymbol{y}|\boldsymbol{x}_1)>0} \log \frac{q^n(\boldsymbol{x}_2,\boldsymbol{y})}{q^n(\boldsymbol{x}_1,\boldsymbol{y})} = - \mu'_{\boldsymbol{x}_1,\boldsymbol{x}_2} = 0\,;
\end{equation}
in such a case, the limit is
\begin{equation}
\lim_{s \to \infty} f_{\boldsymbol{x}_1,\boldsymbol{x}_2}(s) = \sum_{\mathrm{some}\ \boldsymbol{y}} W^n(\boldsymbol{y}|\boldsymbol{x}_1)\,,
\end{equation}
which is strictly positive and at most $1$, consequently
\begin{equation}
\lim_{s \to \infty} \mu_{\boldsymbol{x}_1,\boldsymbol{x}_2}(s) = -\log \sum_{\mathrm{some}\ \boldsymbol{y}} W^n(\boldsymbol{y}|\boldsymbol{x}_1)
\end{equation}
is finite and greater than or equal to $0$.
Finally, $f_{\boldsymbol{x}_1,\boldsymbol{x}_2}(s)$ tends to zero (and consequently $\mu_{\boldsymbol{x}_1,\boldsymbol{x}_2}(s) \to \infty$) if and only if all terms of the sum tend to zero, which happens if and only if
\begin{equation}
\max_{\boldsymbol{y}: W^n(\boldsymbol{y}|\boldsymbol{x}_1)>0} \log \frac{q^n(\boldsymbol{x}_2,\boldsymbol{y})}{q^n(\boldsymbol{x}_1,\boldsymbol{y})} = - \mu'_{\boldsymbol{x}_1,\boldsymbol{x}_2} < 0\,.
\end{equation}
Notice that the same properties hold also for $\mu_{a,b}(s)$ and $\mu'_{a,b}$ for any $a,b\in\mathcal{X}$, since one can choose $\boldsymbol{x}_1=a$ and $\boldsymbol{x}_2 = b$.

Next, we analyze more closely the properties that a pair of codewords $\boldsymbol{x}_1$ and $\boldsymbol{x}_2$ must have in order to satisfy conditions 1 and 2 above for a positive $C_0^q$. Condition 1 is satisfied if and only if in at least a coordinate of the pair of codewords, there is a pair of input symbols $(a,b)$ such that $W(y|a) W(y|b) = 0$ for all $y$, that is, the joint type $P_{\boldsymbol{x}_1,\boldsymbol{x}_2}$ of the two codewords must have $P_{\boldsymbol{x}_1,\boldsymbol{x}_2}(a,b) > 0$ for that pair of input symbols. This condition can be satisfied only if there actually exists a pair of symbols $(a,b)$ such that $W(y|a)W(y|b)=0$ for all $y$. Thus, a precondition for $C_0^q > 0$ is that
\begin{equation}
\mathcal{A} \triangleq \big\{(a,b) \in \mathcal{X}^2 : W(y|a)W(y|b) = 0 \text{ for all } y \in \mathcal{Y}\big\} \neq \varnothing.
\end{equation}
Notice that this is also the condition for the classical $C_0$ to be positive, which is of course a necessary condition to have $C_0^q > 0$, since, as we already pointed out, we have in general $C_0^q \leq C_0$.

Instead, thanks to \eqref{MuDer}, condition 2 is satisfied if and only if both
\begin{equation}
\label{C0DerCond}
\mu_{\boldsymbol{x}_1,\boldsymbol{x}_2}' \geq 0 \qquad \text{and} \qquad \mu_{\boldsymbol{x}_2,\boldsymbol{x}_1}' \geq 0\,.
\end{equation}
Hence, using \eqref{MuType}, there exists a pair of codewords satisfying condition 2 if and only if there exists a joint type $P$ such that both
\begin{equation}
\label{C0TypeCond}
\sum_a \sum_b P(a,b) \mu_{a,b}' \geq 0 \quad\text{and} \quad \sum_a \sum_b P(a,b) \mu_{b,a}' \geq 0\,.
\end{equation}
Any pair of codewords with a joint type $P$ satisfying \eqref{C0TypeCond} satisfies Condition 2 for a positive $C_0^q$. Now, condition \eqref{C0TypeCond} is true if and only if
\begin{equation}
\label{MaxMinMuEta1}
\sup_{P \in \mathcal{P}(\mathcal{X}^2)} \!\!\min\Big\{\sum_a \sum_b P(a,b) \mu_{a,b}'\, ,\,\sum_a \sum_b P(a,b) \mu_{b,a}'\Big\} > 0
\end{equation}
or
\begin{equation}
\label{MaxMinMuEta2}
\min\Big\{\sum_a \sum_b P(a,b) \mu_{a,b}'\, ,\,\sum_a \sum_b P(a,b) \mu_{b,a}'\Big\} = 0
\end{equation}
for some $P \in \mathcal{P}_{\mathbb{Q}}(\mathcal{X}^2)$, where $\mathcal{P}_{\mathbb{Q}}(\mathcal{X}^2)$ denotes the set of probability vectors over $\mathcal{X}^2$ with rational entries. This supremum can be computed easily. Notice first that the minimum of two linear functions is concave. Then, since the minimum of the two functions is invariant with respect to the transformation $P(a,b) \leftrightarrow P(b,a)$, its maximum is always attained (also) by a joint distribution such that $P(a,b) = P(b,a)$ for all $a,b$. In such a case, the two functions are both equal to
\begin{equation}
\label{SymmetricMin}
\sum_{a \leq b} P(a,b) (\mu_{a,b}' + \mu_{b,a}')
\end{equation}
and this quantity is maximized when all the weight is given to the largest term. Notice also that the $P$ achieving this maximum has rational entries and so it belongs to $\mathcal{P}_{\mathbb{Q}}(\mathcal{X}^2)$. Hence, thanks to \eqref{MuDer}, conditions \eqref{MaxMinMuEta1} and \eqref{MaxMinMuEta2} become
\begin{equation}
\label{C0ABCond}
\max_{a,b} \bigg(\min_{y: W(y|a)>0} \log \frac{q(a,y)}{q(b,y)} + \min_{y: W(y|b)>0} \log \frac{q(b,y)}{q(a,y)}\bigg) \geq 0\,.
\end{equation}
Thus, if \eqref{C0ABCond} is true, then we can find at least one joint type $P$ that satisfies \eqref{C0TypeCond}, and with it a set of pairs of codewords that satisfy Condition 2 for $C_0^q>0$. However, we have no guarantees that there exists a pair of codewords in this set that satisfies also Condition 1. For this to be true, it is necessary that a pair of codewords in the set has a joint type with $P(a,b) > 0$ for some $(a,b) \in \mathcal{A}$. We now investigate this issue. If the maximum in \eqref{C0ABCond} is strictly positive, then, thanks to the fact that the argument of the $\max$ in \eqref{SymmetricMin} is linear in $P$, in the neighborhood of the joint type achieving the maximum, there exists a (symmetric) joint type $\hat{P}$ that has $\hat{P}(a,b)>0$ for a pair of symbols $(a,b) \in \mathcal{A}$, and that, when put into \eqref{SymmetricMin}, still returns a positive value. Hence, the two codewords with that joint type satisfy both conditions 1 and 2, and $C_0^q$ is positive. If, instead, the maximum in \eqref{C0ABCond} is exactly zero, then, a joint type satisfying also condition 1 exists only if one of the joint types achieving the maximum already has a positive entry corresponding to a pair of symbols $(a,b) \in \mathcal{A}$, that is, there exists a pair of symbols $(a,b)$ such that 
\begin{equation}
\min_{y: W(y|a)>0} \log \frac{q(a,y)}{q(b,y)} + \min_{y: W(y|b)>0} \log \frac{q(b,y)}{q(a,y)} = 0
\end{equation}
and for all $y$, $W(y|a)W(y|b)=0$.

To summarize, $C_0^q > 0$ if and only if $\mathcal{A} \neq \varnothing$ and either
\begin{equation}
\max_{a,b} \bigg(\min_{y: W(y|a)>0} \log \frac{q(a,y)}{q(b,y)} + \min_{y: W(y|b)>0} \log \frac{q(b,y)}{q(a,y)}\bigg) > 0
\end{equation}
or there exists a pair $(a,b) \in \mathcal{A}$ such that
\begin{equation}
\min_{y: W(y|a)>0} \log \frac{q(a,y)}{q(b,y)} + \min_{y: W(y|b)>0} \log \frac{q(b,y)}{q(a,y)} = 0\,.
\end{equation}
The complementary conditions give the first part of the theorem.

The second part is a bit more straightforward. Condition 1 remains identical; as for condition 2b, condition \eqref{C0DerCond} is replaced by
\begin{equation}
\mu_{\boldsymbol{x}_1,\boldsymbol{x}_2}' > 0 \qquad \text{and} \qquad \mu_{\boldsymbol{x}_2,\boldsymbol{x}_1}' > 0\,.
\end{equation}
Following the same steps as before, we get that $\bar{C}_0^q> 0$ if and only if $\mathcal{A} \neq \varnothing$ and
\begin{equation}
\max_{a,b} \bigg(\min_{y: W(y|a)>0} \log \frac{q(a,y)}{q(b,y)} + \min_{y: W(y|b)>0} \log \frac{q(b,y)}{q(a,y)}\bigg) > 0\,.
\end{equation}
The complementary conditions give the second part of the theorem.

Finally, regarding the corollary, the implication follows from the fact that
\begin{align}
\max_{Q \in \mathcal{P}(\mathcal{X})} &\sup_{s \geq 0} \sum_a \sum_b Q(a) Q(b) \mu_{a,b}(s) \notag \\
&\hspace{-1em}= \frac{1}{2} \max_{Q \in \mathcal{P}(\mathcal{X})} \sup_{s \geq 0} \sum_a \sum_b Q(a) Q(b) \big(\mu_{a,b}(s) + \mu_{b,a}(s)\big) \\
&\hspace{-1em}\leq \frac{1}{2} \max_{Q \in \mathcal{P}(\mathcal{X})}\sum_a \sum_b Q(a) Q(b) \sup_{s \geq 0} \big(\mu_{a,b}(s) + \mu_{b,a}(s)\big)\,, \label{CorollaryIneq}
\end{align}
where the equality follows from the fact that
\begin{equation}
\sum_a \sum_b Q(a) Q(b) \mu_{a,b}(s) = \sum_a \sum_b Q(a) Q(b) \mu_{b,a}(s)\,.
\end{equation}
The quantity in \eqref{CorollaryIneq} is finite if $\bar{C}_0^q = 0$, since inequality \eqref{C0BarCond} can be rewritten as
\begin{equation}
\mu'_{a,b} + \mu'_{b,a} \leq 0\,,
\end{equation}
which by \eqref{MuPrimeProperty1} is equivalent to
\begin{equation}
\lim_{s \to +\infty} \big(\mu_{a,b}(s) +\mu_{b,a}(s)\big) < +\infty\,,
\end{equation}
which in turn implies that
\begin{equation}
\sup_{s \geq 0} \big(\mu_{a,b}(s) +\mu_{b,a}(s)\big) < +\infty
\end{equation}
since $\mu_{a,b}(s) +\mu_{b,a}(s)$ is concave.
\end{proof}

\section{Lower bound on the probability of error}

We now proceed to derive a lower bound on the minimum probability of error of any discrete memoryless channel and mismatched metric, under the assumption that $\bar{C}_0^q = 0$ and that ties are decoded equiprobably. In order to achieve this, we first derive a lower bound on the probability of error of codes with two codewords, and then we generalize the result to codes with an arbitrary number of codewords.

Following \eqref{PemDef}, the probabilities of error for the two messages 1 and 2 satisfy
\begin{align}
P_{e,1}^{(n)} &\geq \sum_{\boldsymbol{y}\notin\mathcal{Y}_1^n} W^n(\boldsymbol{y}|\boldsymbol{x}_1) \label{Pe1Def} \\[1ex]
P_{e,2}^{(n)} &\geq \sum_{\boldsymbol{y}\notin\mathcal{Y}_2^n} W^n(\boldsymbol{y}|\boldsymbol{x}_2)\,,\label{Pe2Def}
\end{align}
where
\begin{align}
\mathcal{Y}_1^n &= \{\boldsymbol{y} \in \mathcal{Y}^n : q^n(\boldsymbol{x}_1,\boldsymbol{y}) \geq q^n(\boldsymbol{x}_2,\boldsymbol{y})\} \\[1ex]
\mathcal{Y}_2^n &= \{\boldsymbol{y} \in \mathcal{Y}^n : q^n(\boldsymbol{x}_1,\boldsymbol{y}) \leq q^n(\boldsymbol{x}_2,\boldsymbol{y})\}.
\end{align}
Notice that the lower bounds are due to the fact that we consider all sequences that are tied between the two messages as correctly decoded. Also, we can restrict our attention only to sequences such that $q^n(\boldsymbol{x}_1,\boldsymbol{y})\, q^n(\boldsymbol{x}_2,\boldsymbol{y}) > 0$, that is, we can substitute $\mathcal{Y}^n$ with the set
\begin{equation}
\hat{\mathcal{Y}}^n = \{\boldsymbol{y}\in\mathcal{Y}^n : q^n(\boldsymbol{x}_1,\boldsymbol{y})\, q^n(\boldsymbol{x}_2,\boldsymbol{y}) > 0\}\,.
\end{equation}
In fact, thanks to the condition in \eqref{MetricCondition1}, if $q^n(\boldsymbol{x}_1,\boldsymbol{y})$ and $q^n(\boldsymbol{x}_1,\boldsymbol{y})$ are both zero for some sequence $\boldsymbol{y}$, then also $W^n(\boldsymbol{y}|\boldsymbol{x}_1)$ and $W^n(\boldsymbol{y}|\boldsymbol{x}_2)$ are zero, and the sequence contributes neither to $P_{e,1}$ nor to $P_{e,2}$. If instead only one of the two is zero, for example $q^n(\boldsymbol{x}_1,\boldsymbol{y})$, then $q^n(\boldsymbol{x}_1,\boldsymbol{y}) < q^n(\boldsymbol{x}_2,\boldsymbol{y})$ and the sequence would only contribute to $P_{e,1}$; however, by \eqref{MetricCondition1} we have $W^n(\boldsymbol{y}|\boldsymbol{x}_1) = 0$, and so also its contribution to $P_{e,1}$ is zero.

We now introduce some tools from the method of types developed by Csisz\'{a}r and K\"{o}rner \cite{csiszar2}. We define the \emph{conditional type} of the sequence $\boldsymbol{y}$ given the codewords $\boldsymbol{x}_1$ and $\boldsymbol{x}_2$ for any $a,b\in\mathcal{X}$ and $y \in \mathcal{Y}$ as
\begin{equation}
V_{\boldsymbol{y}}(y|a,b) \triangleq \frac{P_{\boldsymbol{x}_1,\boldsymbol{x}_2,\boldsymbol{y}}(a,b,y)}{P_{\boldsymbol{x}_1,\boldsymbol{x}_2}(a,b)}\,,
\end{equation}
where $P_{\boldsymbol{x}_1,\boldsymbol{x}_2}$ and $P_{\boldsymbol{x}_1,\boldsymbol{x}_2,\boldsymbol{y}}$ are the joint types of the pair $(\boldsymbol{x}_1,\boldsymbol{x}_2)$ and the triple $(\boldsymbol{x}_1,\boldsymbol{x}_2,\boldsymbol{y})$ respectively. In order to lighten the notation, from now on we let $P_{1,2}(a,b) = P_{\boldsymbol{x}_1,\boldsymbol{x}_2}(a,b)$. We also define the \emph{conditional Kullback-Leibler divergence} as
\begin{equation}
\label{CondKL}
D(V\| Z | P_{1,2}) \triangleq \sum_{a \in \mathcal{X}} \sum_{b \in \mathcal{X}} P_{1,2}(a,b) D\big(V(\cdot\,| a,b)\|Z(\cdot\,| a,b)\big)
\end{equation}
for any two conditional distributions $V,Z:\mathcal{X}^2 \to \mathcal{Y}$.

Now, all sequences $\boldsymbol{y}\in\mathcal{Y}^n$ with the same conditional type $V_{\boldsymbol{y}}$ also have the same values for $W^n(\boldsymbol{y}|\boldsymbol{x}_1)$, $W^n(\boldsymbol{y}|\boldsymbol{x}_2)$, $q^n(\boldsymbol{x}_1,\boldsymbol{y})$ and $q^n(\boldsymbol{x}_2,\boldsymbol{y})$, so they all give the same contribution to the probabilities of error in \eqref{Pe1Def} and \eqref{Pe2Def}. Hence, we can group them together and reformulate the probability of error as a function of conditional types instead of sequences:
\begin{align}
P_{e,1}^{(n)} &= \sum_{V\notin\mathcal{V}_1^n} W^n(V|\boldsymbol{x}_1) \label{Pe1Types} \\
P_{e,2}^{(n)} &= \sum_{V\notin\mathcal{V}_2^n} W^n(V|\boldsymbol{x}_2) \label{Pe2Types}\,,
\end{align}
where 
\begin{align}
\mathcal{V}_1^n &= \{V \in \mathcal{V}^n(\boldsymbol{x}_1,\boldsymbol{x}_2) : q^n(\boldsymbol{x}_1,V) \geq q^n(\boldsymbol{x}_2,V)\} \\
	&= \Big\{V \in \mathcal{V}^n(\boldsymbol{x}_1,\boldsymbol{x}_2) : \notag\\
	&\hspace{2em}\sum_{a,b} P_{1,2}(a,b) \sum_{y} V(y|a,b) \log \frac{q(a,y)}{q(b,y)} \geq 0\Big\}\,,
\end{align}
\begin{multline}
\mathcal{V}_2^n = \Big\{V \in \mathcal{V}^n(\boldsymbol{x}_1,\boldsymbol{x}_2) : \\
\sum_{a,b} P_{1,2}(a,b) \sum_{y} V(y|a,b) \log \frac{q(a,y)}{q(b,y)} \leq 0\Big\}\,,
\end{multline}
and $\mathcal{V}^n(\boldsymbol{x}_1,\boldsymbol{x}_2)$ is the set of all conditional types given $\boldsymbol{x}_1$ and $\boldsymbol{x}_2$.

Furthermore, if we define the two conditional distributions
\begin{align}
W_1(y|a,b) &= W(y|a) \quad\text{for all}\quad b\in\mathcal{X} \\
W_2(y|a,b) &= W(y|b) \quad\text{for all}\quad a \in\mathcal{X}
\end{align}
then from classical results of the method of types (see \cite{csiszar2}) we can derive the following lemma.
\begin{lem}
For any conditional type $V \in \mathcal{V}^n(\boldsymbol{x}_1,\boldsymbol{x}_2)$ we have
\begin{equation}
W^n(V|\boldsymbol{x}_m) \geq \frac{1}{(n+1)^{|\mathcal{X}|^2 |\mathcal{Y}|}} e^{-n D(V\|W_m|P_{1,2})}
\end{equation}
for any $m \in \mathcal{M} = \{1,2\}$.
\end{lem}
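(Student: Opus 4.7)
The plan is to recognize this as a standard method-of-types inequality and adapt the classical joint-type argument to our conditional-type setting. First I would split the codeword positions according to the pair $(\boldsymbol{x}_1,\boldsymbol{x}_2)$: for each $(a,b)\in\mathcal{X}^2$ with $P_{1,2}(a,b)>0$, let $I_{a,b}\subset\{1,\ldots,n\}$ be the set of $nP_{1,2}(a,b)$ positions where $(x_{1,i},x_{2,i})=(a,b)$. The conditional type $V$ amounts to prescribing, inside each $I_{a,b}$, the unconditional type of the sub-word of $\boldsymbol{y}$ to be $V(\cdot|a,b)$. Hence the number of sequences $\boldsymbol{y}$ with $V_{\boldsymbol{y}}=V$ factorizes as a product of multinomial coefficients, one per $(a,b)$ block.

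Next I would compute $W^n(\boldsymbol{y}|\boldsymbol{x}_m)$ for a fixed representative $\boldsymbol{y}$ with $V_{\boldsymbol{y}}=V$. Since this probability depends only on the symbols and on how many times each triple $(a,b,y)$ appears, it equals $\exp\bigl(n\sum_{a,b,y}P_{1,2}(a,b)V(y|a,b)\log W_m(y|a,b)\bigr)$. Multiplying by the counting factor and applying the standard lower bound on multinomial coefficients,
\begin{equation}
\binom{N}{Np_1,\ldots,Np_k}\geq \frac{1}{(N+1)^{k}}\,e^{N H(p)},
\end{equation}
to each block of size $N=nP_{1,2}(a,b)$ with $k=|\mathcal{Y}|$, I obtain
\begin{equation}
W^n(V|\boldsymbol{x}_m)\geq \prod_{a,b}\frac{1}{(nP_{1,2}(a,b)+1)^{|\mathcal{Y}|}}\,e^{-nD(V\|W_m|P_{1,2})},
\end{equation}
where the exponent collapses by combining $nP_{1,2}(a,b)H(V(\cdot|a,b))$ with the log-probability term into the conditional divergence of \eqref{CondKL}.

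Finally, I would bound the prefactor by noting that the number of pairs $(a,b)$ with $P_{1,2}(a,b)>0$ is at most $|\mathcal{X}|^2$ and each factor $(nP_{1,2}(a,b)+1)\leq (n+1)$, so
\begin{equation}
\prod_{a,b}(nP_{1,2}(a,b)+1)^{|\mathcal{Y}|}\leq (n+1)^{|\mathcal{X}|^2|\mathcal{Y}|},
\end{equation}
yielding the claimed bound for both $m=1$ and $m=2$. There is no real obstacle here; the only minor care needed is handling pairs $(a,b)$ with $P_{1,2}(a,b)=0$ (which contribute trivially since $I_{a,b}=\varnothing$ and $V(\cdot|a,b)$ is immaterial for both sides), and making sure the convention $0\log 0=0$ is applied so that the divergence is well defined, and verifying that $V(y|a,b)=0$ whenever $W(y|a)=0$ is enforced by $V\in\mathcal{V}^n(\boldsymbol{x}_1,\boldsymbol{x}_2)$ being the conditional type of an actually attainable $\boldsymbol{y}$ (otherwise the divergence is $+\infty$ and the inequality is vacuous).
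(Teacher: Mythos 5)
Your proof is correct and is precisely the standard method-of-types argument that the paper itself invokes without proof (it simply cites Csisz\'ar--K\"orner for this lemma): partition the coordinates by the pair $(x_{1,i},x_{2,i})$, count sequences of the given conditional type blockwise via multinomial coefficients, and combine the entropy and log-likelihood terms into the conditional divergence. The bookkeeping of the polynomial prefactor and the degenerate cases ($P_{1,2}(a,b)=0$, or $V(y|a,b)>0$ with $W_m(y|a,b)=0$, where the bound is vacuous) is handled correctly.
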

Hence, we can lower bound the probabilities of error in \eqref{Pe1Types} and \eqref{Pe2Types} as
\begin{equation}
\label{PeKL}
P_{e,m}^{(n)} \geq \sum_{V\notin \mathcal{V}^n_m} e^{-n [D(V\|W_m|P_{1,2}) + \delta_1(n)]}\,,
\end{equation}
where
\begin{equation}
\delta_1(n) = |\mathcal{X}|^2 |\mathcal{Y}|\frac{\log (n+1)}{n}\,.
\end{equation}

Also, using \eqref{MuDef}, it can be verified by substitution that
\begin{equation}
\label{MuKLRel}
\mu_{\boldsymbol{x}_1,\boldsymbol{x}_2}(s) - s \mu'_{\boldsymbol{x}_1,\boldsymbol{x}_2}(s)= nD(V_s\|W_1|P_{1,2})
\end{equation}
for the conditional distribution
\begin{equation}
\label{VHat}
V_s(y|a,b) = \frac{W(y|a)\big(\frac{q(b,y)}{q(a,y)}\big)^s}{\sum_{\bar{y}\in\hat{\mathcal{Y}}} W(\bar{y}|a)\big(\frac{q(b,\bar{y})}{q(a,\bar{y})}\big)^s}\,.
\end{equation}

We also need the following lemma about approximating a probability distribution with a type.
\begin{lem}[Shannon \cite{shannon1}]
\label{TypeApprox}
For any distribution $Q \in \mathcal{P}(\mathcal{Y})$, for any $n \in \mathbb{N}$, there exists a type $\hat{Q} \in \mathcal{T}^n(\mathcal{Y})$ such that
\begin{equation}
\lvert\hat{Q}(y) - Q(y)\rvert \leq \frac{1}{n} \quad\text{for all}\quad y\in\mathcal{Y}
\end{equation}
and $\hat{Q}(y)=0$ if $Q(y)=0$.
\end{lem}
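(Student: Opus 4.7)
The plan is to construct $\hat Q$ explicitly by the largest-remainder (Hamilton) rounding procedure applied to $nQ$. First I would set $k_y \triangleq \lfloor n Q(y) \rfloor$ for every $y \in \mathcal{Y}$, so that $k_y/n \leq Q(y) < (k_y+1)/n$, and let $d \triangleq n - \sum_y k_y$. This $d$ is a non-negative integer representing the residual mass that must be redistributed in order for the outcome to be a valid $n$-type.

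The key combinatorial point is that $d$ can be distributed entirely inside the support $\mathcal{S} \triangleq \{y : Q(y) > 0\}$. Since $\sum_y nQ(y) = n$ and $nQ(y) - k_y = 0$ for $y \notin \mathcal{S}$, one has
\begin{equation*}
d = \sum_{y \in \mathcal{S}} \bigl(n Q(y) - k_y\bigr),
\end{equation*}
a sum of $|\mathcal{S}|$ real numbers each lying in $[0,1)$, hence $d < |\mathcal{S}|$. I would therefore pick a subset $\mathcal{S}' \subseteq \mathcal{S}$ with $|\mathcal{S}'| = d$ --- for definiteness, the $d$ elements of $\mathcal{S}$ having the largest fractional parts $nQ(y) - k_y$ --- and define
\begin{equation*}
n \hat Q(y) \triangleq \begin{cases} k_y + 1, & y \in \mathcal{S}', \\ k_y, & \text{otherwise}. \end{cases}
\end{equation*}
By construction $\sum_y n\hat Q(y) = \sum_y k_y + d = n$, so $\hat Q \in \mathcal{T}^n(\mathcal{Y})$.

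It then remains only to verify the two claimed properties. For the uniform $1/n$ bound, in both branches of the definition one has $n \hat Q(y) \in \{k_y, k_y+1\}$, and since $k_y \leq nQ(y) < k_y + 1$, this yields $|n\hat Q(y) - n Q(y)| \leq 1$, i.e.\ $|\hat Q(y) - Q(y)| \leq 1/n$. For the support condition, if $Q(y) = 0$ then $y \notin \mathcal{S} \supseteq \mathcal{S}'$ and $k_y = 0$, so $\hat Q(y) = 0$. There is no real obstacle here; the only point that requires care is ensuring that the residue $d$ is never forced onto a zero of $Q$, which is exactly what the inequality $d < |\mathcal{S}|$ guarantees.
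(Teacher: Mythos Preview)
Your argument is correct: the largest-remainder rounding of $nQ$ is exactly the standard construction, and you handle the only delicate point (that the deficit $d$ satisfies $d<|\mathcal S|$ so the extra units can be placed on the support of $Q$) cleanly. The paper itself does not prove this lemma --- it merely states it and attributes it to Shannon \cite{shannon1} --- so there is nothing to compare your approach against beyond noting that yours is the classical explicit construction.
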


We can now prove the following theorem on the probability of error.
\begin{thm}
\label{PeThm}
For $n$ large enough, the probabilities of error $P_{e,1}^{(n)}$ and $P_{e,2}^{(n)}$ are lower-bounded by
\begin{equation}
P_{e,1}^{(n)} \geq e^{-\mu_{\boldsymbol{x}_1,\boldsymbol{x}_2}(s) + s \mu'_{\boldsymbol{x}_1,\boldsymbol{x}_2}(s) - \delta(n)}
\end{equation}
for every $s$ such that $\mu'_{\boldsymbol{x}_1,\boldsymbol{x}_2}(s) < 0$, and
\begin{equation}
P_{e,2}^{(n)} \geq e^{-\mu_{\boldsymbol{x}_2,\boldsymbol{x}_1}(s) + s \mu'_{\boldsymbol{x}_2,\boldsymbol{x}_1}(s) - \delta(n)}
\end{equation}
for every $s$ such that $\mu'_{\boldsymbol{x}_2,\boldsymbol{x}_1}(s) < 0$, where
\begin{equation}
\delta(n) = \lvert\mathcal{X}\rvert^2 \lvert\mathcal{Y}\rvert\bigg(1 + 2\log (n+1) + \log \frac{1}{W_{\min}}\bigg)
\end{equation}
and $W_{\min} = \min_{x,y} W(y|x)$, where the minimum is over all $x \in \mathcal{X}$ and $y\in\mathcal{Y}$ such that $W(y|x) > 0$.
\end{thm}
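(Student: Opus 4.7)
The plan is to lower-bound $P_{e,1}^{(n)}$ via \eqref{PeKL} by retaining a single term of the sum, choosing the conditional type so that the associated divergence matches the target exponent $\mu_{\boldsymbol{x}_1,\boldsymbol{x}_2}(s) - s\mu'_{\boldsymbol{x}_1,\boldsymbol{x}_2}(s)$. The identity \eqref{MuKLRel} identifies the tilted distribution $V_s$ of \eqref{VHat} as the natural candidate. What remains is (i) to show that $V_s$ lies in $\mathcal{V}^n(\boldsymbol{x}_1,\boldsymbol{x}_2)\setminus\mathcal{V}_1^n$, and (ii) to replace it by an actual conditional type without losing too much in the divergence.

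For (i), combining \eqref{MuFirstDer} with Property 1 of Lemma \ref{MuEtaProperties} yields
\begin{equation*}
\sum_{a,b} P_{1,2}(a,b)\sum_y V_s(y|a,b)\log\frac{q(a,y)}{q(b,y)} = \frac{1}{n}\mu'_{\boldsymbol{x}_1,\boldsymbol{x}_2}(s),
\end{equation*}
which is strictly negative under the hypothesis, so $V_s$ sits strictly inside the complement of $\mathcal{V}_1^n$. For (ii), I would apply Shannon's Lemma \ref{TypeApprox} to each conditional distribution $V_s(\cdot|a,b)$ on the $nP_{1,2}(a,b)$ coordinates where $(\boldsymbol{x}_1,\boldsymbol{x}_2)=(a,b)$, producing a conditional type $\hat{V}$ with $|\hat{V}(y|a,b) - V_s(y|a,b)| \leq 1/(nP_{1,2}(a,b))$ and with the same zero entries as $V_s$. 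Zero-preservation keeps $\hat{V}$ absolutely continuous with respect to $W_1$, so $D(\hat{V}\|W_1|P_{1,2})$ is finite; and for $n$ large enough, the strict inequality above survives the perturbation, so $\hat{V}\in\mathcal{V}^n(\boldsymbol{x}_1,\boldsymbol{x}_2)\setminus\mathcal{V}_1^n$.

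The main obstacle is then to establish a continuity estimate of the form $nD(\hat{V}\|W_1|P_{1,2}) \leq nD(V_s\|W_1|P_{1,2}) + \delta_2(n)$ with $\delta_2(n)$ of order $|\mathcal{X}|^2|\mathcal{Y}|(\log(n+1) + \log(1/W_{\min}))$. Splitting the divergence into cross-entropy and negative-entropy parts, the cross-entropy error is controlled by $|\log W(y|a)|\leq \log(1/W_{\min})$ on the support of $W_1$, yielding the $\log(1/W_{\min})$ contribution. The negative-entropy part is delicate because $x\log x$ is not globally Lipschitz near $0$, but the zero-preservation and the bound $\hat{V}(y|a,b) \geq 1/(nP_{1,2}(a,b))$ on its support, combined with Shannon's perturbation bound, give a contribution of order $\log(n+1)$. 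Substituting into \eqref{PeKL} together with \eqref{MuKLRel} then produces the claimed lower bound with $\delta(n) = \delta_2(n) + n\delta_1(n)$. The bound on $P_{e,2}^{(n)}$ follows by the symmetric argument with $\boldsymbol{x}_1$ and $\boldsymbol{x}_2$ exchanged.
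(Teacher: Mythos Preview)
Your sketch is correct and follows essentially the same route as the paper: identify $V_s$ via \eqref{MuKLRel}, check it lies outside $\mathcal{V}_1^n$ through the derivative identity, approximate by a conditional type using Lemma~\ref{TypeApprox}, and control the divergence perturbation. The only difference is cosmetic: the paper bounds $|\hat{V}\log(\hat{V}/W)-V_s\log(V_s/W)|$ in one piece via the mean value theorem for $g(x)=x\log(x/W)$, whereas you split into cross-entropy and entropy parts; both yield the same $\delta_2(n)$. One small imprecision: Lemma~\ref{TypeApprox} only guarantees $\hat{V}(y|a,b)=0$ when $V_s(y|a,b)=0$, not the converse, so ``same zero entries'' should read ``zeros contain those of $V_s$''; this does not affect your absolute-continuity or entropy arguments, and the paper handles the extra case $V_s>0,\hat{V}=0$ separately with the same order of bound.
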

\begin{proof}
We prove the bound for $P_{e,1}^{(n)}$; the bound for $P_{e,2}^{(n)}$ follows similarly. Notice that we can rewrite
\begin{equation}
\mu'_{\boldsymbol{x}_1,\boldsymbol{x}_2}(s) = \sum_{a,b} P_{1,2}(a,b) \sum_{y} V_s(y|a,b) \log \frac{q(a,y)}{q(b,y)}
\end{equation}
for $V_s$ as defined in \eqref{VHat}. Hence, for every $s$ such that $\mu'_{\boldsymbol{x}_1,\boldsymbol{x}_2}(s) < 0$ we have
\begin{equation}
\label{VIneq}
\sum_{a,b} P_{1,2}(a,b) \sum_{y} V_s(y|a,b) \log \frac{q(a,y)}{q(b,y)} < 0\,.
\end{equation}
Intuitively, for $n$ large enough we can approximate $V_s$ with a conditional type $\hat{V}_s$ that still satisfies \eqref{VIneq}. In fact, thanks to Lemma \ref{TypeApprox} we have
\begin{align}
\bigg\lvert\sum_{a,b} &P_{1,2}(a,b) \sum_{y} [\hat{V}_s(y|a,b)-V_s(y|a,b)]\log \frac{q(a,y)}{q(b,y)}\bigg\rvert \notag\\
	&\leq \sum_{a,b} P_{1,2}(a,b) \sum_{y} \lvert\hat{V}_s(y|a,b)-V_s(y|a,b)\rvert\bigg\lvert\!\log \frac{q(a,y)}{q(b,y)}\bigg\rvert \\
	&\leq \sum_{a,b} P_{1,2}(a,b) \sum_{y} \frac{1}{nP_{1,2}(a,b)}\bigg\lvert\!\log \frac{q(a,y)}{q(b,y)}\bigg\rvert \\
	&\leq \frac{1}{n} \sum_{a,b,y} \bigg\lvert\!\log \frac{q(a,y)}{q(b,y)}\bigg\rvert\,,
\end{align}
that goes to $0$ as $n \to \infty$.
Now, $\hat{V}_s$ does not belong to $\mathcal{V}_1^n$, so we can lower bound \eqref{PeKL} with
\begin{align}
\label{PeVBound}
P_{e,1}^{(n)} &\geq \sum_{V\notin \mathcal{V}^n_1} e^{-n [D(V\|W_1|P_{1,2}) + \delta_1(n)]} \\
	&\geq e^{-n [D(\hat{V}_s\|W_1|P_{1,2}) + \delta_1(n)]}\,. \label{PeVBound}
\end{align}
Again, since $\hat{V}_s$ and $V_s$ are close to each other, also $D(\hat{V}_s\|W_1|P_{1,2})$ and $D(V_s\|W_1|P_{1,2})$ are close to each other. As a matter of fact, we can write
\begin{align}
\lvert D(&\hat{V}_s\|W_1|P_{1,2}) - D(V_s\|W_1|P_{1,2})\rvert \notag\\
	&\leq \sum_{a,b} P_{1,2}(a,b) \sum_{y} \bigg\lvert\hat{V}_s(y|a,b)\log\frac{\hat{V}_s(y|a,b)}{W(y|a)}\notag\\
	&\hspace{9em}- V_s(y|a,b)\log\frac{V_s(y|a,b)}{W(y|a)}\bigg\rvert\,.
\end{align}
Now for each $a$, $b$ and $y$ there are two possibilities: if $V_s(y|a,b) > 0$ and $\hat{V}_s(y|a,b)=0$, then thanks to Lemma \ref{TypeApprox} we have
\begin{align}
\bigg\lvert\hat{V}_s(y|a,b)\log\frac{\hat{V}_s(y|a,b)}{W(y|a)}&- V_s(y|a,b)\log\frac{V_s(y|a,b)}{W(y|a)}\bigg\rvert \notag\\
	&\hspace{-1em}= V_s(y|a,b)\bigg\lvert\log\frac{V_s(y|a,b)}{W(y|a)}\bigg\rvert \\
	&\hspace{-1em}\leq \frac{1}{nP_{1,2}(a,b)} \bigg\lvert\log\frac{V_s(y|a,b)}{W(y|a)}\bigg\rvert\,,
\end{align}
where the term in absolute value is finite and independent of $n$. If instead both $\hat{V}_s(y|a,b)$ and $V_s(y|a,b)$ are positive, then we can apply Lagrange's mean value theorem to the function
\begin{equation}
g(x) = x \log \frac{x}{W(y|a)}\,,
\end{equation}
whose derivative is
\begin{equation}
g'(x) = \log \frac{x}{W(y|x)} + 1\,,
\end{equation}
to get
\begin{align}
\bigg\lvert\hat{V}_s(&y|a,b)\log\frac{\hat{V}_s(y|a,b)}{W(y|a)} - V_s(y|a,b)\log\frac{V_s(y|a,b)}{W(y|a)}\bigg\rvert \notag\\
	&\leq \lvert\hat{V}_s(y|a,b) - V_s(y|a,b)\rvert \bigg\lvert\log\frac{\bar{V}(y|a,b)}{W(y|a)} + 1\bigg\rvert \\
	&\leq \frac{1}{nP_{1,2}(a,b)}\big(1 + \lvert\log W(y|a)\rvert + \lvert\log \bar{V}(y|a,b)\rvert\big)
\end{align}
for some $\bar{V}(y|a,b) \in \big(V_s(y|a,b), \hat{V}_s(y|a,b)\big)$ (the interval endpoints might be inverted). 

Notice that both $\hat{V}_s(y|a,b)$ and $\bar{V}(y|a,b)$ depend implicitly on $n$. In order to make this dependence explicit, we study two cases. If $V_s(y|a,b) < \hat{V}_s(y|a,b)$, then \\$V_s(y|a,b)<\bar{V}(y|a,b)$ and
\begin{align}
\bigg\lvert\hat{V}_s&(y|a,b)\log\frac{\hat{V}_s(y|a,b)}{W(y|a)} - V_s(y|a,b)\log\frac{V_s(y|a,b)}{W(y|a)}\bigg\rvert \notag\\
	&\leq \frac{1}{nP_{1,2}(a,b)}\big(1 + \lvert\log W(y|a)\rvert + \lvert\log V_s(y|a,b)\rvert\big)\,.
\end{align}
If instead $\hat{V}_s(y|a,b) < V_s(y|a,b)$, then $\frac{1}{n} \leq \hat{V}_s(y|a,b)<\bar{V}(y|a,b)$ and
\begin{align}
\bigg\lvert\hat{V}_s(y|a,b)&\log\frac{\hat{V}_s(y|a,b)}{W(y|a)}- V_s(y|a,b)\log\frac{V_s(y|a,b)}{W(y|a)}\bigg\rvert \notag\\
	&\leq \frac{1}{nP_{1,2}(a,b)}\big(1 + \log n +\lvert\log W(y|a)\rvert \big)\,.
\end{align}
Putting it all together, for $n$ large enough we have
\begin{align}
\lvert D(\hat{V}_s\|&W_1|P_{1,2}) - D(V_s\|W_1|P_{1,2})\rvert \notag\\
	&\leq \frac{1}{n} \sum_{a,b,y} \big(1 + \log n + \lvert\log W(y|a)\rvert\big) \\
	&\leq \frac{\lvert\mathcal{X}\rvert^2 \lvert\mathcal{Y}\rvert}{n}\bigg(1 + \log n + \log \frac{1}{W_{\min}}\bigg) \triangleq \delta_2(n) \label{KLDiffBound}
\end{align}
that again goes to $0$ as $n \to \infty$. Hence, equations \eqref{MuKLRel}, \eqref{PeVBound} and \eqref{KLDiffBound} lead to
\begin{align}
P_{e,1}^{(n)} &\geq e^{-n [D(\hat{V}_s\|W_1|P_{1,2}) + \delta_1(n)]} \\
	&\geq e^{-n D(V_s\|W_1|P_{1,2}) -n\delta_1(n) -n\delta_2(n)} \\
	&\geq e^{-\mu_{\boldsymbol{x}_1,\boldsymbol{x}_2}(s) + s \mu'_{\boldsymbol{x}_1,\boldsymbol{x}_2}(s) - \delta(n)}\,,
\end{align}
which concludes the proof.
\end{proof}
Notice that Theorem \ref{PeThm} holds for arbitrary tie-breaking rules. The following corollary, instead, holds only under the assumption that ties are broken equiprobably (or in the case where ties are always counted as errors).
\begin{cor}
If ties are broken equiprobably, then:
\begin{align}
P_{e,1}^{(n)} &\geq \exp\Big\{-\sup_{s\geq 0}\mu_{\boldsymbol{x}_1,\boldsymbol{x}_2}(s) - \delta(n)\Big\} \\
P_{e,2}^{(n)} &\geq \exp\Big\{-\sup_{s\geq 0}\mu_{\boldsymbol{x}_2,\boldsymbol{x}_1}(s) - \delta(n)\Big\}\,.
\end{align}
\end{cor}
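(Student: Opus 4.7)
My plan is to deduce the corollary from Theorem~\ref{PeThm} by choosing $s$ equal to the argmax $s^{*}$ of $\mu_{\boldsymbol{x}_1,\boldsymbol{x}_2}$ on $[0,\infty)$; the role of the equiprobable tie-breaking assumption is to close the gap at the boundary value $\mu'_{\boldsymbol{x}_1,\boldsymbol{x}_2}(s^{*})=0$, a case excluded by Theorem~\ref{PeThm}. If $\sup_{s\geq 0}\mu_{\boldsymbol{x}_1,\boldsymbol{x}_2}(s)=+\infty$ the bound is trivial, so I assume the supremum is finite. By concavity of $\mu_{\boldsymbol{x}_1,\boldsymbol{x}_2}$ (property~2 of Lemma~\ref{MuEtaProperties}) together with $\mu_{\boldsymbol{x}_1,\boldsymbol{x}_2}(0)\geq 0$, the supremum is attained at some $s^{*}\in[0,\infty)$ with $\mu'_{\boldsymbol{x}_1,\boldsymbol{x}_2}(s^{*})\leq 0$ (with equality whenever $s^{*}>0$), or it is only approached as $s\to\infty$, in which case $\mu'_{\boldsymbol{x}_1,\boldsymbol{x}_2}(s)>0$ for every finite $s$ and $\mu'_{\boldsymbol{x}_1,\boldsymbol{x}_2}=0$.

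Writing $h(s):=\mu_{\boldsymbol{x}_1,\boldsymbol{x}_2}(s)-s\mu'_{\boldsymbol{x}_1,\boldsymbol{x}_2}(s)$, concavity gives $h'(s)=-s\mu''_{\boldsymbol{x}_1,\boldsymbol{x}_2}(s)\geq 0$ for $s\geq 0$, while $h(s^{*})=\mu_{\boldsymbol{x}_1,\boldsymbol{x}_2}(s^{*})=\sup_{s\geq 0}\mu_{\boldsymbol{x}_1,\boldsymbol{x}_2}(s)$; Theorem~\ref{PeThm} evaluated at any $s>s^{*}$ therefore gives only the weaker estimate $e^{-h(s)-\delta(n)}\leq e^{-\sup_{s\geq 0}\mu_{\boldsymbol{x}_1,\boldsymbol{x}_2}(s)-\delta(n)}$, so I would revisit the proof of Theorem~\ref{PeThm} directly at $s=s^{*}$. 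At $s=s^{*}$ the conditional distribution $V_{s^{*}}$ lies on the separating hyperplane $\mathcal{V}_1^n\cap\mathcal{V}_2^n$, and by biasing the Shannon-type rounding of Lemma~\ref{TypeApprox} I would produce a conditional type $\hat{V}_{s^{*}}$ within $O(1/n)$ of $V_{s^{*}}$ that lies in $\mathcal{V}_2^n$; under equiprobable tie-breaking every output sequence of that type is either a strict error or a tie, contributing in either case at least $\tfrac{1}{2}W^n(\boldsymbol{y}|\boldsymbol{x}_1)$ to $P_{e,1}^{(n)}$. Continuing as in the proof of Theorem~\ref{PeThm} and using $nD(V_{s^{*}}\|W_1|P_{1,2})=\mu_{\boldsymbol{x}_1,\boldsymbol{x}_2}(s^{*})-s^{*}\cdot 0=\mu_{\boldsymbol{x}_1,\boldsymbol{x}_2}(s^{*})$, I obtain $P_{e,1}^{(n)}\geq\tfrac{1}{2}e^{-\mu_{\boldsymbol{x}_1,\boldsymbol{x}_2}(s^{*})-\delta(n)}$, and the $\log 2$ is absorbed into the $|\mathcal{X}|^2|\mathcal{Y}|$ slack already present in the definition of $\delta(n)$. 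The boundary case $s^{*}=0$ with $\mu'_{\boldsymbol{x}_1,\boldsymbol{x}_2}(0)<0$ follows from Theorem~\ref{PeThm} at $s=0$ directly.

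The remaining case, $\mu'_{\boldsymbol{x}_1,\boldsymbol{x}_2}(s)>0$ for every finite $s$, I would handle without invoking Theorem~\ref{PeThm}: taking $s\to\infty$ in~\eqref{fAB}, only tied outputs survive in the limit and
\[
e^{-\sup_{s\geq 0}\mu_{\boldsymbol{x}_1,\boldsymbol{x}_2}(s)}=\!\!\!\!\sum_{\substack{\boldsymbol{y}\,:\,W^n(\boldsymbol{y}|\boldsymbol{x}_1)>0,\\q^n(\boldsymbol{x}_1,\boldsymbol{y})=q^n(\boldsymbol{x}_2,\boldsymbol{y})}}\!\!\!\!\!W^n(\boldsymbol{y}|\boldsymbol{x}_1);
\]
equiprobable tie-breaking makes each tie contribute $\tfrac{1}{2}W^n(\boldsymbol{y}|\boldsymbol{x}_1)$ to $P_{e,1}^{(n)}$, and the $\tfrac{1}{2}$ is again absorbed by the slack in $\delta(n)$. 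I expect the hardest step to be exhibiting the biased rounding $\hat{V}_{s^{*}}\in\mathcal{V}_2^n$ within $O(1/n)$ of $V_{s^{*}}$ while controlling $D(\hat{V}_{s^{*}}\|W_1|P_{1,2})-D(V_{s^{*}}\|W_1|P_{1,2})$ so that the excess is absorbed by the $\delta_2(n)$ term of the existing $\delta(n)$; the proof for $P_{e,2}^{(n)}$ is identical with the roles of $\boldsymbol{x}_1$ and $\boldsymbol{x}_2$ swapped.
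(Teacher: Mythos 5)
Your case split ($\sup_{s\geq0}\mu_{\boldsymbol{x}_1,\boldsymbol{x}_2}(s)=+\infty$; supremum attained at a finite $s^{*}$; supremum approached only as $s\to\infty$) is the same as the paper's, and your handling of the first and third cases essentially reproduces the paper's argument: in the tie-dominated case the paper likewise identifies $e^{-\sup_{s\geq0}\mu_{\boldsymbol{x}_1,\boldsymbol{x}_2}(s)}$ with the total probability of the tied output sequences and lets the constant be absorbed by $\delta(n)$.

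The divergence, and the gap, is in the central case. You dismiss Theorem~\ref{PeThm} too quickly: although at any \emph{fixed} $s>s^{*}$ the bound $P_{e,1}^{(n)}\geq e^{-h(s)-\delta(n)}$ with $h(s)=\mu_{\boldsymbol{x}_1,\boldsymbol{x}_2}(s)-s\mu'_{\boldsymbol{x}_1,\boldsymbol{x}_2}(s)$ is indeed weaker than the target, the left-hand side does not depend on $s$, so one may let $s\downarrow s^{*}$ (where $\mu'_{\boldsymbol{x}_1,\boldsymbol{x}_2}(s)<0$ throughout) and use continuity of $\mu_{\boldsymbol{x}_1,\boldsymbol{x}_2}$ and $\mu'_{\boldsymbol{x}_1,\boldsymbol{x}_2}$ together with $\mu'_{\boldsymbol{x}_1,\boldsymbol{x}_2}(s^{*})=0$ (or $s^{*}=0$) to get $P_{e,1}^{(n)}\geq e^{-\mu_{\boldsymbol{x}_1,\boldsymbol{x}_2}(s^{*})-\delta(n)}=e^{-\sup_{s\geq0}\mu_{\boldsymbol{x}_1,\boldsymbol{x}_2}(s)-\delta(n)}$. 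This is exactly how the paper closes this case; no tie-breaking argument is needed there. Your replacement --- re-running the proof of Theorem~\ref{PeThm} at $s=s^{*}$ with a rounding of $V_{s^{*}}$ biased into $\mathcal{V}_2^n$ and counting boundary types as half-errors --- is a plausible alternative, but its key step (producing such a biased conditional type while controlling the divergence perturbation) is precisely the step you leave unproven and flag as the hardest, so as written the proposal does not establish the corollary in the main case. A further cosmetic point: the $\log 2$ you invoke in that case is not literally covered by the slack between $n\delta_1(n)+n\delta_2(n)$ and $\delta(n)$, which is only $\lvert\mathcal{X}\rvert^2\lvert\mathcal{Y}\rvert\log\frac{n+1}{n}$; one would have to enlarge $\delta(n)$ by $\log 2$, which is harmless but should be said.
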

\begin{proof}
We prove again only the bound for $P_{e,1}^{(n)}$. There are three possibilities:
\begin{enumerate}
\item $\lim_{s\to\infty} \mu_{\boldsymbol{x}_1,\boldsymbol{x}_2}(s) = +\infty$;
\item $\lim_{s\to\infty} \mu_{\boldsymbol{x}_1,\boldsymbol{x}_2}(s) \in (-\infty, +\infty)$;
\item $\lim_{s\to\infty} \mu_{\boldsymbol{x}_1,\boldsymbol{x}_2}(s) =-\infty$.
\end{enumerate}
In the first case, we have $\sup_{s\geq 0}\mu_{\boldsymbol{x}_1,\boldsymbol{x}_2}(s) = +\infty$ and the bound simply becomes $P_{e,1}^{(n)} \geq 0$, which is trivial. 

In the second case, due to the concavity of $\mu_{\boldsymbol{x}_1,\boldsymbol{x}_2}(s)$, we have 
\begin{equation}
\sup_{s\geq 0}\mu_{\boldsymbol{x}_1,\boldsymbol{x}_2}(s) = \lim_{s\to\infty} \mu_{\boldsymbol{x}_1,\boldsymbol{x}_2}(s)\,.
\end{equation}
Since the limit is a finite real number, then from the definition of $\mu_{\boldsymbol{x}_1,\boldsymbol{x}_2}(s)$ in \eqref{MuDef} we can deduce that for all sequences $\boldsymbol{y}\in\hat{\mathcal{Y}}^n$ such that $W^n(\boldsymbol{y}|\boldsymbol{x}_1)>0$, that is, for all sequences $\boldsymbol{y}$ that can possibly contribute to $P_{e,1}^{(n)}$, we must have $\frac{q^n(\boldsymbol{x}_2,\boldsymbol{y})}{q^n(\boldsymbol{x}_1,\boldsymbol{y})} \leq 1$, or equivalently, $q^n(\boldsymbol{x}_2,\boldsymbol{y}) \leq q^n(\boldsymbol{x}_1,\boldsymbol{y})$. Since all sequences such that $q^n(\boldsymbol{x}_2,\boldsymbol{y}) < q^n(\boldsymbol{x}_1,\boldsymbol{y})$ do not contribute to $P_{e,1}^{(n)}$, this means that all sequences that appear in the sum \eqref{Pe1Def} are those that satisfy $q^n(\boldsymbol{x}_1,\boldsymbol{y}) = q^n(\boldsymbol{x}_2,\boldsymbol{y})$. Hence, in this case we can write
\begin{equation}
P_{e,1}^{(n)} = \sum_{\boldsymbol{y} \in \hat{\mathcal{Y}}^n_{\rm t}} W^n(\boldsymbol{y}|\boldsymbol{x}_1)\,,
\end{equation}
where
\begin{equation}
\hat{\mathcal{Y}}^n_{\rm t} = \{\boldsymbol{y}\in\hat{\mathcal{Y}}^n : q^n(\boldsymbol{x}_1,\boldsymbol{y}) = q^n(\boldsymbol{x}_2,\boldsymbol{y})\}\,.
\end{equation}
But in this particular case we also have
\begin{align}
\lim_{s\to\infty} \mu_{\boldsymbol{x}_1,\boldsymbol{x}_2}(s) &= \lim_{s\to\infty} -\log \sum_{\boldsymbol{y} \in \hat{\mathcal{Y}}^n} W^n(\boldsymbol{y}|\boldsymbol{x}_1) \bigg(\frac{q^n(\boldsymbol{x}_2,\boldsymbol{y})}{q^n(\boldsymbol{x}_1,\boldsymbol{y})}\bigg)^s \\
	&= -\log \sum_{\boldsymbol{y} \in \hat{\mathcal{Y}}^n_{\rm t}} W^n(\boldsymbol{y}|\boldsymbol{x}_1) = -\log P_{e,1}^{(n)}\,,
\end{align}
or equivalently, 
\begin{equation}
P_{e,1}^{(n)} = \exp\Big\{-\lim_{s\to\infty} \mu_{\boldsymbol{x}_1,\boldsymbol{x}_2}(s)\Big\} =  \exp\Big\{-\sup_{s\geq 0} \mu_{\boldsymbol{x}_1,\boldsymbol{x}_2}(s)\Big\}\,.
\end{equation}

In the third case, let $\hat{s} = \argmax_{s \in \mathbb{R}} \mu_{\boldsymbol{x}_1,\boldsymbol{x}_2}(s)$. If $\hat{s} \geq 0$, then thanks to the continuity of $\mu_{\boldsymbol{x}_1,\boldsymbol{x}_2}(s)$ and its derivative, we can apply Theorem \ref{PeThm} for $s \to \hat{s}$, so that $\mu_{\boldsymbol{x}_1,\boldsymbol{x}_2}(s) \to \mu_{\boldsymbol{x}_1,\boldsymbol{x}_2}(\hat{s}) = \sup_{s \geq 0} \mu_{\boldsymbol{x}_1,\boldsymbol{x}_2}(s)$ and $\mu'_{\boldsymbol{x}_1,\boldsymbol{x}_2}(s) \to 0$. If instead $\hat{s} < 0$, we can apply Theorem \ref{PeThm} with $s=0$.
\end{proof}

Corollary 2 leads to the fact that
\begin{equation}
P_e^{(n)} \geq \exp\big\{-n D_{\boldsymbol{x}_1,\boldsymbol{x}_2}^{(n)} + o(n)\big\}\,,
\end{equation}
where
\begin{equation}
\label{DMuEta}
D_{\boldsymbol{x}_1,\boldsymbol{x}_2}^{(n)} = \min \bigg\{\sup_{s\geq 0} \frac{1}{n}\,\mu_{\boldsymbol{x}_1,\boldsymbol{x}_2}(s), \sup_{s\geq 0} \frac{1}{n}\,\mu_{\boldsymbol{x}_2,\boldsymbol{x}_1}(s)\bigg\}\,.
\end{equation}

Finally, notice that if we consider a code with more than two codewords, say $M$, then there is one message $m$ such that
\begin{equation}
P_{e,m}^{(n)} \geq \exp \big\{-n D_{\min}(\mathcal{C}) + o(n)\big\}\,,
\end{equation}
where
\begin{equation}
\label{DMinDef}
D_{\min}(\mathcal{C}) \triangleq \min_{m \neq m' \in \mathcal{C}} D_{\boldsymbol{x}_m,\boldsymbol{x}_{m'}}^{(n)}\,,
\end{equation}
and therefore, for the whole code, the average probability of error is lower-bounded by
\begin{equation}
\label{PeDMinBound}
P_e^{(n)} \geq \frac{P_{e,m}^{(n)}}{M} \geq \exp \big\{-n \big(D_{\min}(\mathcal{C}) + R+o(1)\big)\big\}\,.
\end{equation}

\section{Upper bound on the reliability function}

Equation \eqref{PeDMinBound} shows that the problem of upper-bounding the exponent of the probability of error reduces to upper-bounding $D_{\min}(\mathcal{C})$. Specifically, for any rate $R > 0$, the number of codewords $M$ of every code of rate $R$ goes to infinity when the blocklength $n$ goes to infinity; hence, if we can find an upper bound on $D_{\min}(\mathcal{C})$ which is valid for all codes whose size $M$ tends to infinity, then thanks to \eqref{PeDMinBound}, that bound will also be a valid bound on
\begin{equation}
E(0^+) = \lim_{R \to 0} \limsup_{n \to \infty} -\frac{\log P_e(R,n)}{n}\,.
\end{equation}

Before going into the formal details, we give a brief outline of the proof of our bound and the intuition behind it. First of all, the minimum distance $D_{\min}(\mathcal{C})$ of any code $\mathcal{C}$ can be upper-bounded by the minimum distance of any subcode extracted from $\mathcal{C}$. Furthermore, the minimum distance $D_{\min}(\mathcal{C})$ is upper-bounded by the average distance $D_{\boldsymbol{x}_m,\boldsymbol{x}_{m'}}^{(n)}$ over all pairs of codewords in $\mathcal{C}$. Therefore, one natural way to upper bound the minimum distance of a code is to upper bound the average distance over a carefully selected subcode, that is,
\begin{align}
D_{\min}(\mathcal{C}) &\leq D_{\min}(\mathcal{\hat{C}}) \leq \frac{1}{\hat{M}(\hat{M}-1)}\sum_{m \neq m'\in\hat{\mathcal{C}}} \!\!D_{\boldsymbol{x}_m,\boldsymbol{x}_{m'}}^{(n)} \\
	&\hspace{-3em}= \frac{1}{\hat{M}(\hat{M}-1)}\notag\\
	&\hspace{-1.5em}\sum_{m \neq m'\in\hat{\mathcal{C}}} \min \bigg\{\sup_{s\geq 0} \frac{1}{n}\,\mu_{\boldsymbol{x}_m,\boldsymbol{x}_{m'}}(s), \sup_{s\geq 0} \frac{1}{n}\,\mu_{\boldsymbol{x}_{m'},\boldsymbol{x}_m}(s)\bigg\} \label{DBoundAverage}
\end{align}
for any $\hat{\mathcal{C}}\subset \mathcal{C}$ with $|\hat{\mathcal{C}}|=\hat{M}$.

The choice of the subcode $\hat{\mathcal{C}}$ is crucial, since, in general, the average in \eqref{DBoundAverage} may be too difficult to evaluate, for two reasons: for two generic codewords $\boldsymbol{x}_m$ and $\boldsymbol{x}_{m'}$, the functions $\mu_{\boldsymbol{x}_m,\boldsymbol{x}_{m'}}(s)$ and $\mu_{\boldsymbol{x}_{m'},\boldsymbol{x}_m}(s)$ can be very different from each other, and also, different pairs of codewords have in general very different values of $s$ at which the functions $\mu(s)$ attain their supremum. Luckily, we are able to overcome both these difficulties thanks to the following result, which is essentially by Koml\'{o}s \cite{komlos1}, and that was first employed in a coding setting by Blinovsky \cite{blinovsky1}, in the maximum likelihood case. We first state the following fundamental lemma. In order to lighten the notation, from now on, in all subscripts, a pair of codewords $\boldsymbol{x}_m, \boldsymbol{x}_{m'}$ will be denoted just by $m,m'$.
\begin{lem}[Koml\'{o}s \cite{komlos1}]
Consider a code $\mathcal{C}$ with $M$ codewords. If for each pair $(a,b) \in \mathcal{X}^2$ there exists a number $r_{a,b}$ such that for all $m < m'$,
\begin{equation}
\big\lvert\,P_{m,m'}(a,b) - r_{a,b}\,\big\rvert \leq \delta\,,
\end{equation}
then, for all $m \neq m'$ and $(a,b) \in \mathcal{X}^2$,
\begin{equation}
\big\lvert\,P_{m,m'}(a,b) - P_{m,m'}(b,a)\,\big\rvert \leq \frac{6}{\sqrt{M}} + 4\sqrt{\delta} + 4\delta\,.
\end{equation}
\end{lem}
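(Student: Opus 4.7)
The approach is to reduce the claim to a bound on $|r_{a,b} - r_{b,a}|$, and then establish that bound via a Cauchy--Schwarz / second-moment argument on the antisymmetric part of the pair counts.

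First, I note that the hypothesis, applied to the pair $(a,b)$ and separately to the pair $(b,a)$, gives for $m < m'$ both $|P_{m,m'}(a,b) - r_{a,b}| \leq \delta$ and $|P_{m,m'}(b,a) - r_{b,a}| \leq \delta$; the roles of $r_{a,b}$ and $r_{b,a}$ swap when $m > m'$ via the identity $P_{m,m'}(b,a) = P_{m',m}(a,b)$. The triangle inequality then gives $|P_{m,m'}(a,b) - P_{m,m'}(b,a)| \leq |\Delta| + 2\delta$, where $\Delta := r_{a,b} - r_{b,a}$, so it suffices to prove $|\Delta| \leq 6/\sqrt M + 4\sqrt{\delta} + 2\delta$.

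Next, I introduce the antisymmetric counts $\tilde n_{m,m'} := n[P_{m,m'}(a,b) - P_{m,m'}(b,a)]$. The hypothesis implies $\tilde n_{m,m'} \in [n\Delta - 2n\delta,\, n\Delta + 2n\delta]$ for $m < m'$ (with opposite sign for $m > m'$ by antisymmetry), hence
\begin{equation*}
\sum_{m \neq m'} \tilde n_{m,m'}^{\,2} \geq M(M-1)\,n^2\,(|\Delta| - 2\delta)_+^2.
\end{equation*}
Independently, expanding the square coordinate-wise and exchanging summations yields the identity
\begin{equation*}
\sum_{m \neq m'} \tilde n_{m,m'}^{\,2} = 2\sum_{l,l'} \det\!\begin{pmatrix} k_{aa}(l,l') & k_{ab}(l,l') \\ k_{ba}(l,l') & k_{bb}(l,l') \end{pmatrix},
\end{equation*}
where $k_{cd}(l,l') := |\{m : x_{m,l}=c,\, x_{m,l'}=d\}|$. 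The main step is to upper bound this determinantal sum using only the hypothesis, via Cauchy--Schwarz on the Frobenius inner products of the matrices $K^{(cd)} = (k_{cd}(l,l'))_{l,l'}$; the near-constancy of the pairwise types essentially pins down all four Frobenius norms, yielding a bound of order $Mn^2 + M(M-1)\,n^2 \delta$.

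Matching the two estimates gives $(M-1)(|\Delta| - 2\delta)_+^2 \leq O(1) + O(M\delta)$, hence $|\Delta| \leq O(1/\sqrt M) + O(\sqrt{\delta}) + 2\delta$; tracking constants carefully recovers the claimed form. The principal obstacle is the upper bound in the second estimate: without the hypothesis, the determinantal integrand has no sign and can be of order $M^2$ per pair $(l,l')$, so extracting the correct power of $M$ requires the full force of the near-constancy of the pairwise types. This is the combinatorial heart of Komlós's argument, obtained in his original work via Ramsey-theoretic averaging.
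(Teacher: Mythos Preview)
The paper does not itself prove this lemma: it is quoted from Koml\'os, and the text refers the reader to the companion paper \cite{bondaschi1} for a proof in the list-decoding setting. So there is no in-paper argument to compare against, and I assess your proposal on its own terms.

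Your reduction to bounding $|\Delta|=|r_{a,b}-r_{b,a}|$ is correct, and so are the lower bound $\sum_{m\neq m'}\tilde n_{m,m'}^{\,2}\geq M(M-1)\,n^2(|\Delta|-2\delta)_+^2$ and the determinantal identity (indeed $\sum_{m\neq m'}\tilde n_{m,m'}^{\,2}=\|A-A^T\|_F^2$ with $A_{m,m'}=nP_{m,m'}(a,b)$). The gap is entirely in the upper bound. You assert that Cauchy--Schwarz on the Frobenius inner products of the $K^{(cd)}$, together with the near-constancy of the pairwise types, gives a bound of order $Mn^2+M(M-1)n^2\delta$, but you do not carry this out, and I do not see how it follows: Cauchy--Schwarz controls $|\langle K^{(aa)},K^{(bb)}\rangle|$ and $|\langle K^{(ab)},K^{(ba)}\rangle|$ separately, not their \emph{difference}, which is what you need. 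Using only the hypothesis, the obvious upper bound on $\|A-A^T\|_F^2$ is $M(M-1)\,n^2(|\Delta|+2\delta)^2$, which combined with your lower bound is vacuous. A non-circular upper bound of the order you claim must exploit the factorisation $A=UV^T$ with $U,V\in\{0,1\}^{M\times n}$ (equivalently, that the $\tilde n_{m,m'}$ arise from codewords), and this is exactly the step you have left open.

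There is also a misattribution in your closing sentence. The Ramsey-theoretic input in this paper is used to extract the nearly-symmetric subcode (Theorem~\ref{KomlosThm}); it is \emph{not} the mechanism behind the present lemma, which is a deterministic second-moment statement about a code whose ordered pairwise types are \emph{already} nearly constant. Koml\'os's argument for this lemma is a direct counting/variance computation, not a Ramsey argument. So as written, your proposal outlines a plausible framework but defers the only substantive step, and mislocates where the difficulty is resolved.
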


Then, using Ramsey's theorem on the edge coloring of graphs (see for example \cite{diestel1}), the following result can be proved.

\begin{thm}
\label{KomlosThm}
For any positive integers $t$ and $\hat{M}$, there exists a positive integer $M_0(\hat{M},t)$ such that from any code $\mathcal{C}$ with $M>M_0(\hat{M},t)$ codewords, a subcode $\hat{\mathcal{C}} \subset \mathcal{C}$ with $\hat{M}$ codewords can be extracted such that for any $m\neq m'$ and $\bar{m} \neq \bar{m}'$ (not necessarily different from $m$ and $m'$) in $\hat{\mathcal{C}}$, and any $(a,b) \in \mathcal{X}^2$,
\begin{equation}
\big\lvert\,P_{m,m'}(a,b) - P_{\bar{m},\bar{m}'}(a,b)\,\big\rvert \leq \Delta(\hat{M},t)\,,
\end{equation}
where
\begin{equation}
\label{DeltaDef}
\Delta(\hat{M},t) \triangleq \frac{6}{\sqrt{\hat{M}}} + 2\sqrt{\frac{2}{t}} + \frac{3}{t}\,.
\end{equation}
\end{thm}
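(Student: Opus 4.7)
The plan is to combine Ramsey's theorem (cited in the excerpt as \cite{diestel1}) with the Koml\'{o}s lemma just stated. Ramsey will provide a large subcode in which all canonically ordered joint types share a common discretized profile; Koml\'{o}s will then turn this profile into an approximately symmetric one. Together these deliver the uniformity across arbitrary ordered pairs that the statement demands.

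Concretely, the first step is to partition $[0,1]$ into $t$ subintervals of width $1/t$, and to color each pair $m<m'$ in $\{1,\ldots,M\}$ by the vector whose $(a,b)$-th coordinate is the index of the subinterval containing $P_{m,m'}(a,b)$. This uses at most $t^{|\mathcal{X}|^2}$ colors, so by the multicolor Ramsey theorem there exists a finite $M_0(\hat{M},t)$ with the property that, whenever $M>M_0(\hat{M},t)$, the complete graph $K_M$ contains a monochromatic $K_{\hat{M}}$; the $\hat{M}$ codewords of such a clique will be our subcode $\hat{\mathcal{C}}$. By construction, for every $(a,b)$ the values $P_{m,m'}(a,b)$ with $m<m'$ in $\hat{\mathcal{C}}$ all lie in a common subinterval of width $1/t$, and choosing $r_{a,b}$ to be its midpoint yields $|P_{m,m'}(a,b)-r_{a,b}|\leq 1/(2t)$ uniformly. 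Applying Koml\'{o}s's lemma to $\hat{\mathcal{C}}$ with $\delta=1/(2t)$ then gives
\begin{equation*}
|P_{m,m'}(a,b) - P_{m,m'}(b,a)| \leq \frac{6}{\sqrt{\hat{M}}} + 2\sqrt{\tfrac{2}{t}} + \frac{2}{t}
\end{equation*}
for all $m\neq m'$ in $\hat{\mathcal{C}}$ and all $(a,b)\in\mathcal{X}^2$.

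To finish, I would reduce $|P_{m,m'}(a,b)-P_{\bar m,\bar m'}(a,b)|$ to this near-symmetry bound by a short case analysis on the orderings of the two pairs. When both pairs are in canonical form (smaller index first), the monochromatic property alone yields a gap of at most $1/t$. Otherwise, after rewriting (say) $P_{\bar m,\bar m'}(a,b)=P_{\bar m',\bar m}(b,a)$ and inserting $P_{\bar m',\bar m}(a,b)$, the triangle inequality combines a $1/t$ contribution from the monochromatic property with the Koml\'{o}s bound displayed above, for a total of $6/\sqrt{\hat{M}}+2\sqrt{2/t}+3/t=\Delta(\hat{M},t)$. The only real obstacle I foresee is bookkeeping the constants: the $2\sqrt{2/t}$ term in $\Delta$ pins $\delta$ to $1/(2t)$, which is precisely the half-width of a discretization cell, and this is what allows the monochromatic step to contribute only a single extra $1/t$ in the case analysis, producing the $3/t$ in $\Delta$ without any second invocation of Koml\'{o}s's lemma. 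Ramsey's contribution is absorbed entirely into the value of $M_0(\hat{M},t)$ and does not affect the constants.
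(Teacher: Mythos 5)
Your proof is correct and follows exactly the route the paper indicates (Ramsey's theorem applied to the $t^{|\mathcal{X}|^2}$-coloring by discretized joint types, followed by Koml\'{o}s's lemma with $\delta=1/(2t)$ on the monochromatic clique), which is the argument deferred to \cite{bondaschi1}. The constant accounting is right: $4\sqrt{1/(2t)}=2\sqrt{2/t}$, $4\delta=2/t$, and the single extra $1/t$ from comparing two canonically ordered types in the same cell yields precisely $\Delta(\hat{M},t)$.
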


A proof of the two previous results in the more general list-decoding setting can be found in \cite{bondaschi1}.
Koml\'{o}s' result shows that for any positive integer $\hat{M}$, any code with an appropriately large number of codewords contains a subcode of size $\hat{M}$, whose codewords satisfy certain symmetry properties, namely: 
\begin{enumerate}[label=(\roman*)]
\item all pairs of codewords have approximately the same joint type;
\item the joint types are also approximately symmetrical, that is, $P(a,b) \simeq P(b,a)$ for all $a,b$.
\end{enumerate}
Thanks to property \eqref{MuType}, the fact that all pairs of codewords have similar joint types implies that they also have similar $\mu(s)$, while the fact that these types are close to symmetrical implies that $\mu_{\boldsymbol{x}_m,\boldsymbol{x}_{m'}}(s)$ and $\mu_{\boldsymbol{x}_{m'},\boldsymbol{x}_m}(s)$ are close to each other. However, technical problems arise due to the presence of the suprema in \eqref{DBoundAverage}, since even if the joint types are close to each other, the suprema of the functions $\mu(s)$ might be very different if they are approached as $s \to \infty$. This constrains our study only to a (very wide) class of channels and decoding metrics for which we are sure that at least one supremum in the definition of each $D_{\boldsymbol{x}_m,\boldsymbol{x}_{m'}}^{(n)}$ is attained at an $s$ no larger than a known fixed value. The class is the following.

\begin{dfn}
A discrete memoryless channel $W(y|x)$ and a decoding metric $q(x,y)$ form a \emph{balanced pair} if $\bar{C}_0^q = 0$ and for every pair $(a,b) \in \mathcal{X}^2$ belonging to the set
\begin{equation}
\label{BSet}
\mathcal{B} \triangleq \left\{(a,b) \in \mathcal{X}^2 : \min_{y : W(y|a) > 0} \frac{q(a,y)}{q(b,y)} = \max_{y: W(y|b)>0} \frac{q(a,y)}{q(b,y)}\right\}
\end{equation}
there exists a constant $B(a,b)$ such that
\begin{equation}
\label{BalancedCond}
\frac{q(a,y)}{q(b,y)} = B(a,b)
\end{equation}
for all $y \in \hat{\mathcal{Y}}_{a,b}$ such that $W(y|a) + W(y|b) > 0$.
\end{dfn}
Notice that all channels and decoding metrics such that $\bar{C}_0^q = 0$ and
\begin{equation}
\label{Subclass}
W(y|x) > 0 \iff q(x,y) > 0
\end{equation}
are balanced pairs, and indeed represent a very important special case. To see this, consider a channel-metric pair satisfying \eqref{Subclass}; for any $(a,b) \in \mathcal{B}$, we can partition the set of possible output symbols in $\hat{\mathcal{Y}}_{a,b}$ into three subsets:
\begin{align}
\mathcal{S}_a &= \{y: W(y|a) > 0 \quad\text{and}\quad W(y|b)=0\} \\
\mathcal{S}_b &= \{y: W(y|a) = 0 \quad\text{and}\quad W(y|b)>0\} \\
\mathcal{S}_{ab} &= \{y: W(y|a) > 0 \quad\text{and}\quad W(y|b)>0\}\,.
\end{align}
For all $y \in \mathcal{S}_a$, $q(a,y) > 0$ and $q(b,y) = 0$, therefore $q(a,y)/q(b,y) = +\infty$. Similarly, $q(a,y)/q(b,y) = 0$ for all $y \in \mathcal{S}_b$. Hence, we have that
\begin{align}
\min_{y : W(y|a) > 0} \frac{q(a,y)}{q(b,y)} &= \min_{y : W(y|a)W(y|b) > 0} \frac{q(a,y)}{q(b,y)} \\
\max_{y: W(y|b)>0} \frac{q(a,y)}{q(b,y)} &= \max_{y: W(y|a)W(y|b)>0} \frac{q(a,y)}{q(b,y)}\,,
\end{align}
and since $(a,b) \in \mathcal{B}$, these two quantities must be equal, that is,
\begin{equation}
\min_{y : W(y|a)W(y|b) > 0} \frac{q(a,y)}{q(b,y)} = \max_{y: W(y|a)W(y|b)>0} \frac{q(a,y)}{q(b,y)}\,,
\end{equation}
which means that the ratio $q(a,y)/q(b,y)$ must be equal for all possible $y \in \hat{\mathcal{Y}}_{a,b}$. This proves that the channel-metric pair is indeed balanced.
Furthermore, for this particular subclass,
\begin{equation}
C_0 = 0 \iff C_0^q = 0 \iff \bar{C}_0^q = 0\,,
\end{equation}
where $C_0$ is the classical zero-error capacity.
An example of a non-balanced channel-decoding metric pair is the following.
\begin{ex}
Consider the three-input typewriter channel with $\mathcal{X} = \mathcal{Y} = \{0,1,2\}$ and crossover probabilities $W(1|0) = W(2|1) = W(0|2) = \varepsilon$, with $0 < \varepsilon < 2 - \sqrt{2}$. Furthermore, consider a decoding metric such that $q(a,y) = W(y|a)$ for all $a$ and $y$ with the exception of $q(1,0) = q(1,2) = \frac{\varepsilon}{2}$ (Fig. 1). 

\begin{figure}
\centering
\includegraphics[draft=false,width=0.25\textwidth]{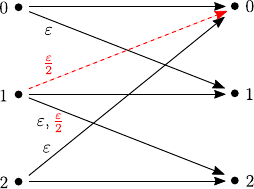}
\caption{Unbalanced channel-decoding metric pair of Example 1.}
\end{figure}

As one can check, this channel-decoding metric satisfies the condition for $\bar{C}_0^q = 0$ in Theorem 1, since for all $a,b \in \mathcal{X}$,
\begin{equation}
\min_{y: W(y|a) > 0} \frac{q(a,y)}{q(b,y)} = \max_{y: W(y|b) > 0} \frac{q(a,y)}{q(b,y)}\,.
\end{equation}
Notice that for this channel also the classical zero-error capacity $C_0$ is zero.
However, this channel-decoding metric pair does not satisfy the second condition in Definition 1 for a balanced pair, since
\begin{equation}
\frac{q(0,0)}{q(1,0)} = \frac{2(1-\varepsilon)}{\varepsilon} \neq \frac{q(0,1)}{q(1,1)} = \frac{\varepsilon}{1-\varepsilon}\,.
\end{equation}
\end{ex}

One last lemma that will be useful in bounding the average in \eqref{DBoundAverage} is the following, which is a standard trick employed, for example, in the derivation of the Plotkin bound.

\begin{lem}
\label{PlotkinLemma}
For any code with $\hat{M}$ codewords of blocklength $n$, for any $a, b \in \mathcal{X}$, with $a \neq b$,
\begin{equation}
\sum_{m \neq m'} P_{m,m'}(a,b) = \frac{1}{n} \sum_{c=1}^n \hat{M}_c(a) \hat{M}_c(b)\,,
\end{equation}
where $\hat{M}_c(a)$ is the number of times the symbol $a$ occurs in the coordinate $c$ in all the codewords.
\end{lem}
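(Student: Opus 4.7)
The plan is a straightforward double-counting argument, writing the joint type in its indicator-function form and swapping the order of summation. Concretely, for any pair of codewords $\boldsymbol{x}_m$ and $\boldsymbol{x}_{m'}$, the joint type is
\begin{equation}
P_{m,m'}(a,b) = \frac{1}{n}\sum_{c=1}^n \mathbf{1}\{x_{m,c}=a\}\,\mathbf{1}\{x_{m',c}=b\},
\end{equation}
so summing over all ordered pairs $m\neq m'$ and swapping the sums yields
\begin{equation}
\sum_{m\neq m'} P_{m,m'}(a,b) = \frac{1}{n}\sum_{c=1}^n \sum_{m\neq m'} \mathbf{1}\{x_{m,c}=a\}\,\mathbf{1}\{x_{m',c}=b\}.
\end{equation}

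Next I would recognise that for each fixed coordinate $c$, the inner unrestricted sum over \emph{all} ordered pairs $(m,m')$ factors as $\hat{M}_c(a)\hat{M}_c(b)$, by the very definition of $\hat{M}_c(\cdot)$. Removing the diagonal contribution $m=m'$ subtracts
\begin{equation}
\sum_{m} \mathbf{1}\{x_{m,c}=a\}\,\mathbf{1}\{x_{m,c}=b\},
\end{equation}
which vanishes identically since $a\neq b$ and the same symbol $x_{m,c}$ cannot simultaneously equal $a$ and $b$. Hence the diagonal makes no contribution and the inner sum equals $\hat{M}_c(a)\hat{M}_c(b)$ for each $c$, giving the claim after summation over $c$.

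There is no real obstacle: the argument is just an exchange of sums plus the observation that the $a\neq b$ hypothesis kills the diagonal, so the identity is exact rather than an inequality. The only thing to be careful about in writing it up is clarifying that $P_{m,m'}$ counts \emph{ordered} coordinate pairs (consistent with the rest of the paper) and that the sum $\sum_{m\neq m'}$ on the left also runs over ordered pairs, so that both sides are symmetric under $(a,b)\leftrightarrow(b,a)$ only after re-indexing.
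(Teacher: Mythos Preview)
Your proposal is correct and is essentially the same double-counting argument the paper gives: both count, in two ways, the number of ordered occurrences of the pair $(a,b)$ in matching coordinates across all ordered codeword pairs. Your version is slightly more explicit in using indicator functions and in showing why the hypothesis $a\neq b$ is needed to kill the diagonal, whereas the paper phrases the same computation verbally via the $\hat{M}\times n$ code matrix.
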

\begin{proof}
Imagine the code as an $\hat{M} \times n$ matrix, with each codeword as a row. Then, $\sum_{m\neq m'} n P_{m,m'}(a,b)$ is the number of times the pair $(a,b)$ can be found by selecting any two entries of the matrix belonging to the same column. The same computation can be performed column by column: for a generic column $c$, that number is simply the number of times $a$ occurs in that column, multiplied by the number of times $b$ occurs. Finally, summing over all columns returns the same number of the first computation, thus proving the lemma.
\end{proof}

We are ready to proceed and prove the upper bound on the reliability function at $R=0^+$ for any balanced channel-metric pair. In particular, we will show that for this class, the $D_{m,m'}^{(n)}$ are all close to each other for all pairs of codewords in the symmetric subcode $\hat{\mathcal{C}} \subset \mathcal{C}$, whose existence is guaranteed by Theorem 3. In order to show this, first of all, for any concave function $f(s)$, let\footnote{Here $f(+\infty)$ means $\lim_{s \to +\infty} f(s)$. If $\lim_{s \to +\infty} f(s) = +\infty$, then $\mathcal{S} = \{+\infty\}$, since $f(s)$ is concave.}
\begin{equation}
\mathcal{S} \triangleq \Big\{0 \leq s \leq +\infty : f(s) = \sup_{s \geq 0} f(s)\Big\}
\end{equation}
and define
\begin{equation}
\argsup_{s \geq 0} f(s) \triangleq \inf \mathcal{S}\,.
\end{equation}
In the following lemma we prove that in the case of balanced pairs, for all pairs of codewords, the concave functions $\mu_{m,m'}(s) + \mu_{m',m}(s)$ achieve their suprema at an $s$ in a bounded interval, determined only by the channel and the decoding metric.
\begin{lem}
\label{SHatLemma}
For any balanced pair, for any pair of codewords $m,m'$,
\begin{equation}
\label{MuEtaArgSup}
\argsup_{s \geq 0} \big(\mu_{m,m'}(s) + \mu_{m',m}(s)\big) \in [\,0,\,\hat{s}\,]\,,
\end{equation}
where
\begin{equation}
\label{SHat}
\hat{s} \triangleq \max_{a,b}\Big\{ \argsup_{s \geq 0} \big(\mu_{a,b}(s) + \mu_{b,a}(s)\big)\Big\} < +\infty\,.
\end{equation}
\end{lem}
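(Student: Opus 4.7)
My plan is to reduce the two-codeword bound to a per-letter statement via the additivity of $\mu$, and then exploit the balanced condition to control the behavior of the single-letter functions at large $s$. First, I would invoke property 1 of Lemma \ref{MuEtaProperties} to write
\begin{equation}
\mu_{m,m'}(s)+\mu_{m',m}(s) = n\!\!\sum_{a,b\in\mathcal{X}} P_{m,m'}(a,b)\,g_{a,b}(s),
\end{equation}
where $g_{a,b}(s) \triangleq \mu_{a,b}(s)+\mu_{b,a}(s)$. Since each $g_{a,b}$ is concave (property 2) and the codeword-level function $g_{m,m'}(s)\triangleq \mu_{m,m'}(s)+\mu_{m',m}(s)$ is a non-negative linear combination of the $g_{a,b}$'s, it will suffice to show that every $g_{a,b}$ is non-increasing on $[\hat{s},+\infty)$.

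I would then split into cases according to whether $(a,b)\in\mathcal{B}$. If $(a,b)\notin\mathcal{B}$, the hypothesis $\bar{C}_0^q=0$, together with the violation of the defining equality of $\mathcal{B}$ and the identity \eqref{MuDer}, forces the strict inequality $\mu'_{a,b}+\mu'_{b,a}<0$. Concavity of $g_{a,b}$ then makes its derivative non-increasing with strictly negative limit, so it is non-positive past some finite $s^\star_{a,b}$; hence $\argsup_{s\geq 0} g_{a,b}(s)\leq s^\star_{a,b}<+\infty$ and $g_{a,b}$ is non-increasing on $[s^\star_{a,b},+\infty)$. If instead $(a,b)\in\mathcal{B}$, I would plug the balanced-pair identity $q(a,y)/q(b,y)=B(a,b)$ (valid for every $y\in\hat{\mathcal{Y}}_{a,b}$ with $W(y|a)+W(y|b)>0$) into \eqref{MuAB}: only indices with $W(y|a)>0$ contribute to $\mu_{a,b}(s)$, and on those indices the ratio is the constant $1/B(a,b)$, yielding $\mu_{a,b}(s)=s\log B(a,b)-\log P_{a,b}$ with $P_{a,b}$ the corresponding marginal $W(\cdot\,|a)$-mass. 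A symmetric calculation, together with $B(b,a)=1/B(a,b)$, then gives $g_{a,b}(s)=-\log(P_{a,b}\,P_{b,a})$, which is \emph{constant} in $s$, so $\argsup_{s\geq 0} g_{a,b}(s)=0$ under the infimum convention.

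Taking the maximum over the finite set $\mathcal{X}^2$ of the finite quantities just established yields $\hat{s}<+\infty$. Since every $g_{a,b}$ is non-increasing on $[\hat{s},+\infty)$, so is the non-negative combination $g_{m,m'}$, and therefore $\sup_{s\geq 0}g_{m,m'}(s)$ is attained on $[0,\hat{s}]$, giving \eqref{MuEtaArgSup}. The step I expect to be most delicate is the constant-in-$s$ identity for $g_{a,b}$ with $(a,b)\in\mathcal{B}$: one must carefully account for the fact that indices $y$ with $W(y|a)>0$ but $q(b,y)=0$ are excluded from $\hat{\mathcal{Y}}_{a,b}$, so the sum in \eqref{MuAB} ranges only over those $y$ where the balanced condition actually enforces the constant ratio.
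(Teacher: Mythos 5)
Your proposal is correct and follows essentially the same route as the paper's proof: the case split on $(a,b)\in\mathcal{B}$ versus $(a,b)\notin\mathcal{B}$ (equivalently $\mu'_{a,b}+\mu'_{b,a}=0$ versus $<0$ under $\bar{C}_0^q=0$), the observation that the balanced condition makes $\mu_{a,b}(s)+\mu_{b,a}(s)$ constant on $\mathcal{B}$, and the conclusion via the type decomposition \eqref{MuType} and concavity that $\mu'_{m,m'}(\hat{s})+\mu'_{m',m}(\hat{s})\leq 0$ are all exactly the paper's steps. The delicate point you flag (that only $y$ with $W(y|a)>0$ contribute to the sum, where the balanced condition enforces the constant ratio) is handled the same way in the paper.
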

\begin{proof}
We first show that $\hat{s}$ is finite. We already pointed out in the proof of Theorem \ref{C0Thm}, that equation \eqref{C0BarCond} can be rewritten as
\begin{equation}
\mu'_{a,b} + \mu'_{b,a} \leq 0\,.
\end{equation}
For the $(a,b) \in \mathcal{X}^2$ such that $\mu'_{a,b} + \mu'_{b,a} < 0$, we have that
\begin{equation}
\lim_{s \to +\infty} \mu_{a,b}(s) + \mu_{b,a}(s) = -\infty\,,
\end{equation}
which in turn implies that there exists a finite $\hat{s}_{a,b} \geq 0$ such that
\begin{equation}
\mu_{a,b}(\hat{s}_{a,b}) + \mu_{b,a}(\hat{s}_{a,b}) = \max_{s \geq 0} \big(\mu_{a,b}(s) + \mu_{b,a}(s)\big)
\end{equation}
since $\mu_{a,b}(s) + \mu_{b,a}(s)$ is concave. For the $(a,b) \in \mathcal{X}^2$ such that $\mu'_{a,b} + \mu'_{b,a} = 0$, instead, equation \eqref{BalancedCond} implies that
\begin{align}
\mu_{a,b}(s) &= -\log\sum_{y \in \hat{\mathcal{Y}}_{a,b}} W(y|a) \bigg(\frac{q(b,y)}{q(a,y)}\bigg)^s \\
	&= -\log\sum_{y \in \hat{\mathcal{Y}}_{a,b}} W(y|a) B(a,b)^{-s} \\
	&= s \log B(a,b) -\log\sum_{y \in \hat{\mathcal{Y}}_{a,b}} W(y|a)\,,
\end{align}
which is a straight line. Furthermore, since ${B(b,a) = 1/B(a,b)}$, we have that
\begin{align}
\mu_{a,b}&(s) + \mu_{b,a}(s) \notag\\
	&=  s \log B(a,b)B(b,a) -\log\sum_{y \in \hat{\mathcal{Y}}_{a,b}} W(y|a) \notag\\
	&\hspace{11em}-\log\sum_{y \in \hat{\mathcal{Y}}_{a,b}} W(y|b) \\
	&= -\log\sum_{y \in \hat{\mathcal{Y}}_{a,b}} W(y|a) -\log\sum_{y \in \hat{\mathcal{Y}}_{a,b}} W(y|b)\,,
\end{align}
which is a constant. Hence, 
\begin{equation}
\sup_{s \geq 0} \big(\mu_{a,b}(s) + \mu_{b,a}(s)\big) = \mu_{a,b}(0) + \mu_{b,a}(0)
\end{equation}
and we can set $\hat{s}_{a,b} =0$ for these $(a,b)$. Thus, the $\hat{s}$ defined by \eqref{SHat} can be rewritten as
\begin{equation}
\hat{s} = \max_{a,b} \,\hat{s}_{a,b}\,,
\end{equation}
which is finite, since all $\hat{s}_{a,b}$ are finite.

Equation \eqref{MuEtaArgSup} follows from the fact that
\begin{multline}
\mu'_{m,m'}(\hat{s}) + \mu'_{m',m}(\hat{s}) \\
= n \sum_a \sum_b P_{m,m'}(a,b) \big(\mu'_{a,b}(\hat{s}) + \mu'_{b,a}(\hat{s})\big) \leq 0\,,
\end{multline}
where the equality is due to \eqref{MuType}, while the inequality is due to the fact that for all $(a,b)$,
\begin{equation}
\mu'_{a,b}(\hat{s}) + \mu'_{b,a}(\hat{s}) \leq \mu'_{a,b}(\hat{s}_{a,b}) + \mu'_{b,a}(\hat{s}_{a,b}) \leq 0\,,
\end{equation}
since $\hat{s}_{a,b} \leq \hat{s}$ and $\mu_{a,b}(s) + \mu_{b,a}(s)$ is concave.
\end{proof}

The previous lemma and the symmetry properties of the codewords in $\hat{\mathcal{C}}$ lead to the following fundamental result, that shows that the $D_{m,m'}^{(n)}$ are close to each other for all pairs of codewords in $\hat{\mathcal{C}}$. This fact is what will make the computation of the average in \eqref{DBoundAverage} possible.
\begin{lem}
\label{SBarLemma}
For any balanced pair, for any pair of codewords $m,m' \in \hat{\mathcal{C}}$, let
\begin{equation}
\label{SBarDef}
\bar{s}_{m,m'} \triangleq \min \Big\{\argsup_{s \geq 0}\mu_{m,m'}(s), \argsup_{s \geq 0} \mu_{m',m}(s)\Big\}\,.
\end{equation}
Then, $0 \leq \bar{s}_{m,m'} \leq \hat{s}$, with $\hat{s}$ defined by \eqref{SHat}, and
\begin{equation}
\label{DMuBound}
D_{m,m'}^{(n)} \leq \frac{1}{n}\,\mu_{m,m'}(\bar{s}_{m,m'}) + K \Delta(\hat{M},t)
\end{equation}
with $\Delta(\hat{M},t)$ as defined by \eqref{DeltaDef}, and
\begin{equation}
K \triangleq \max_{0 \leq s \leq \hat{s}} \sum_a \sum_b \big\lvert\mu_{a,b}(s)\big\rvert\,.
\end{equation}
Furthermore, for any other pair of codewords $\bar{m}, \bar{m}' \in \hat{\mathcal{C}}$,
\begin{equation}
\label{MuDifferenceBound}
\bigg\lvert \frac{1}{n}\,\mu_{\bar{m},\bar{m}'}(\bar{s}_{\bar{m},\bar{m}'}) - \frac{1}{n}\,\mu_{\bar{m},\bar{m}'}(\bar{s}_{m,m'})\bigg\rvert \leq 4K \Delta(\hat{M},t)\,.
\end{equation}
\end{lem}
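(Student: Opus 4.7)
The plan is to prove the three claims using a single approximate-symmetry estimate derived from Theorem \ref{KomlosThm}. Applied to the pairs $(m,m')$ and $(m',m)$ in $\hat{\mathcal{C}}$, it gives $|P_{m,m'}(a,b)-P_{m,m'}(b,a)|\leq\Delta(\hat{M},t)$; applied to $(m,m')$ and $(\bar{m},\bar{m}')$, it gives $|P_{m,m'}(a,b)-P_{\bar{m},\bar{m}'}(a,b)|\leq\Delta(\hat{M},t)$. Combining with property 1 of Lemma \ref{MuEtaProperties} and writing the difference of any two relevant $\mu$ functions as $\sum_{a,b}(\text{type difference})\,\mu_{a,b}(s)$, I obtain, for all $s\in[0,\hat{s}]$,
\begin{equation*}
\Bigl|\tfrac{1}{n}\mu_{m,m'}(s)-\tfrac{1}{n}\mu_{m',m}(s)\Bigr|\leq K\Delta(\hat{M},t),
\end{equation*}
and the analogous bound with any two of $\mu_{m,m'}$, $\mu_{m',m}$, $\mu_{\bar{m},\bar{m}'}$, $\mu_{\bar{m}',\bar{m}}$ in the two arguments.

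For the range of $\bar{s}_{m,m'}$ I would mimic the argument in the proof of Lemma \ref{SHatLemma}: by property 1 of Lemma \ref{MuEtaProperties},
\begin{equation*}
\mu'_{m,m'}(\hat{s})+\mu'_{m',m}(\hat{s})=n\sum_{a,b}P_{m,m'}(a,b)\bigl(\mu'_{a,b}(\hat{s})+\mu'_{b,a}(\hat{s})\bigr),
\end{equation*}
which is nonpositive since each $\mu_{a,b}+\mu_{b,a}$ is concave and maximized at some $\hat{s}_{a,b}\leq\hat{s}$. Hence at least one of $\mu'_{m,m'}(\hat{s})$, $\mu'_{m',m}(\hat{s})$ is nonpositive, and by concavity the corresponding $\mu$ achieves its supremum at or before $\hat{s}$, giving $\bar{s}_{m,m'}\in[0,\hat{s}]$. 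The bound \eqref{DMuBound} then splits into two cases: if $\bar{s}_{m,m'}=\argsup_{s\geq 0}\mu_{m,m'}(s)$, \eqref{DMuEta} directly yields $D_{m,m'}^{(n)}\leq\frac{1}{n}\mu_{m,m'}(\bar{s}_{m,m'})$; if instead $\bar{s}_{m,m'}=\argsup_{s\geq 0}\mu_{m',m}(s)$, then \eqref{DMuEta} gives $D_{m,m'}^{(n)}\leq\frac{1}{n}\mu_{m',m}(\bar{s}_{m,m'})$, and a single application of the approximate-symmetry estimate at $s=\bar{s}_{m,m'}\in[0,\hat{s}]$ absorbs the extra $K\Delta(\hat{M},t)$ term.

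For \eqref{MuDifferenceBound}, set $s^\star=\bar{s}_{\bar{m},\bar{m}'}$, $t^\star=\bar{s}_{m,m'}$, and telescope
\begin{equation*}
\tfrac{1}{n}\mu_{\bar{m},\bar{m}'}(s^\star)-\tfrac{1}{n}\mu_{\bar{m},\bar{m}'}(t^\star)=A_1+B+A_2,
\end{equation*}
where $B=\frac{1}{n}h(s^\star)-\frac{1}{n}h(t^\star)$ with $h\in\{\mu_{m,m'},\mu_{m',m}\}$ chosen to have $t^\star$ as its argsup (so $B\leq 0$), and $A_1$, $A_2$ are the endpoint corrections $\frac{1}{n}(\mu_{\bar{m},\bar{m}'}-h)$ at $s^\star$ and $t^\star$. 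When $h=\mu_{m,m'}$, each of $A_1,A_2$ is bounded by $K\Delta(\hat{M},t)$ from the approximate-symmetry estimate between $\mu_{\bar{m},\bar{m}'}$ and $\mu_{m,m'}$; when $h=\mu_{m',m}$, each needs two chained applications (through $\mu_{m,m'}$ or through $\mu_{\bar{m}',\bar{m}}$) and is therefore bounded by $2K\Delta(\hat{M},t)$. A symmetric telescoping with $h'\in\{\mu_{\bar{m},\bar{m}'},\mu_{\bar{m}',\bar{m}}\}$ chosen to have $s^\star$ as its argsup gives the matching lower bound. The worst case over the four subcases yields $|{\cdot}|\leq 4K\Delta(\hat{M},t)$.

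The main obstacle is precisely the bookkeeping in this last step: each of $\bar{s}_{m,m'}$, $\bar{s}_{\bar{m},\bar{m}'}$ is the argsup of one of two $\mu$'s (whichever gives the smaller value), so four subcases arise, and in the worst one two chained invocations of the approximate-symmetry estimate on each outer term of the telescope are required, which is what pins down the constant $4$ in $4K\Delta(\hat{M},t)$.
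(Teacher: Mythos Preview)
Your proposal is correct and follows essentially the same approach as the paper: the same derivative argument for $\bar{s}_{m,m'}\le\hat{s}$, the same two-case split for \eqref{DMuBound}, and for \eqref{MuDifferenceBound} the paper uses the triangle inequality through the waypoint $\tfrac{1}{n}\mu_{m,m'}(\bar{s}_{m,m'})$ and then case-splits on which of $\mu_{m,m'},\mu_{m',m}$ has $\bar{s}_{m,m'}$ as its argsup, which is just a reorganisation of your telescoping $A_1+B+A_2$. The worst subcase in both arguments requires two chained approximate-symmetry applications, producing the same constant $4K\Delta(\hat{M},t)$.
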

\begin{proof}
To prove that $\bar{s}_{m,m'} \leq \hat{s}$, notice that from equation \eqref{MuEtaArgSup} we get
\begin{equation}
\mu'_{m,m'}(\hat{s}) + \mu'_{m',m}(\hat{s}) \leq 0\,,
\end{equation}
which is possible only if
\begin{equation}
\mu'_{m,m'}(\hat{s}) \leq 0 \quad \text{or} \quad \mu'_{m',m}(\hat{s}) \leq 0\,,
\end{equation}
which in turn implies that
\begin{equation}
\argsup_{s \geq 0}\mu_{m,m'}(s) \leq \hat{s} \quad \text{or} \quad \argsup_{s \geq 0} \mu_{m',m}(s) \leq \hat{s}\,.
\end{equation}
This proves that
\begin{equation}
\bar{s}_{m,m'} \triangleq \min \Big\{\argsup_{s \geq 0}\mu_{m,m'}(s), \argsup_{s \geq 0} \mu_{m',m}(s)\Big\} \leq \hat{s}\,.
\end{equation}

Next, definition \eqref{SBarDef} implies that
\begin{equation}
\label{SupAlternative1}
\sup_{s \geq 0} \mu_{m,m'}(s) = \mu_{m,m'}(\bar{s}_{m,m'})
\end{equation}
or
\begin{equation}
\label{SupAlternative2}
\sup_{s \geq 0} \mu_{m',m}(s) = \mu_{m',m}(\bar{s}_{m,m'}).
\end{equation}
Hence, we have that, thanks to \eqref{DMuEta},
\begin{equation}
D_{m,m'}^{(n)} \leq \frac{1}{n} \,\mu_{m,m'}(\bar{s}_{m,m'}) \quad \text{or} \quad D_{m,m'}^{(n)} \leq \frac{1}{n} \,\mu_{m',m}(\bar{s}_{m,m'})\,.
\end{equation}
In the first case, equation \eqref{DMuBound} follows immediately; in the second case, we have that
\begin{align}
\bigg\lvert \frac{1}{n}\,&\mu_{m',m}(\bar{s}_{m,m'}) - \frac{1}{n}\,\mu_{m,m'}(\bar{s}_{m,m'})\bigg\rvert \notag\\
	&= \Big\lvert \sum_a \sum_b \big(P_{m,m'}(a,b) - P_{m,m'}(b,a)\big)\mu_{a,b}(\bar{s}_{m,m'})\Big\rvert \\
	&\leq \sum_a \sum_b \big\lvert P_{m,m'}(a,b) - P_{m',m}(a,b)\big\rvert \big\lvert\mu_{a,b}(\bar{s}_{m,m'})\big\rvert \\
	&\leq \Delta(\hat{M},t) \sum_a \sum_b \big\lvert\mu_{a,b}(\bar{s}_{m,m'})\big\rvert \\
	&\leq \Delta(\hat{M},t) \max_{0 \leq s \leq \hat{s}} \sum_a \sum_b \big\lvert\mu_{a,b}(s)\big\rvert \\
	&= K \Delta(\hat{M},t) \label{MuEtaDifferenceBound}
\end{align}
and therefore,
\begin{equation}
D_{m,m'}^{(n)} \leq \frac{1}{n} \,\mu_{m',m}(\bar{s}_{m,m'}) \leq \frac{1}{n}\,\mu_{m,m'}(\bar{s}_{m,m'}) + K \Delta(\hat{M},t)\,.
\end{equation}

Finally, in order to prove \eqref{MuDifferenceBound}, first notice that
\begin{align}
\bigg\lvert \frac{1}{n}\,&\mu_{\bar{m},\bar{m}'}(\bar{s}_{\bar{m},\bar{m}'}) - \frac{1}{n}\,\mu_{\bar{m},\bar{m}'}(\bar{s}_{m,m'})\bigg\rvert \notag\\
	&= \frac{1}{n}\big\lvert \mu_{\bar{m},\bar{m}'}(\bar{s}_{\bar{m},\bar{m}'}) - \mu_{m,m'}(\bar{s}_{m,m'}) \notag\\
	&\hspace{6em}+ \mu_{m,m'}(\bar{s}_{m,m'}) - \mu_{\bar{m},\bar{m}'}(\bar{s}_{m,m'})\big\rvert \\
	&\leq \frac{1}{n}\big\lvert \mu_{\bar{m},\bar{m}'}(\bar{s}_{\bar{m},\bar{m}'}) - \mu_{m,m'}(\bar{s}_{m,m'})\big\rvert \notag\\
	&\hspace{4.5em}+ \frac{1}{n}\big\lvert \mu_{m,m'}(\bar{s}_{m,m'}) - \mu_{\bar{m},\bar{m}'}(\bar{s}_{m,m'})\big\rvert\,. \label{TwoAbsBound}
\end{align}
The second absolute value can be bounded as follows:
\begin{align}
\frac{1}{n}\big\lvert &\mu_{m,m'}(\bar{s}_{m,m'}) - \mu_{\bar{m},\bar{m}'}(\bar{s}_{m,m'})\big\rvert \notag\\
	&= \Big\lvert \sum_a \sum_b \big(P_{m,m'}(a,b) - P_{\bar{m},\bar{m}'}(a,b)\big)\mu_{a,b}(\bar{s}_{m,m'})\Big\rvert \\
	&\leq \sum_a \sum_b \big\lvert P_{m,m'}(a,b) - P_{\bar{m},\bar{m}'}(a,b)\big\rvert \big\lvert\mu_{a,b}(\bar{s}_{m,m'})\big\rvert \\
	&\leq \Delta(\hat{M},t) \sum_a \sum_b \big\lvert\mu_{a,b}(\bar{s}_{m,m'})\big\rvert \\
	&\leq K \Delta(\hat{M},t)\,, \label{MuMuDifferenceBound}
\end{align}
which also holds for every $0\leq s \leq \hat{s}$.
The first absolute value, instead, can be bounded in the following way. Suppose that
\begin{equation}
\frac{1}{n}\,\mu_{\bar{m},\bar{m}'}(\bar{s}_{\bar{m},\bar{m}'}) \geq \frac{1}{n}\,\mu_{m,m'}(\bar{s}_{m,m'})\,;
\end{equation}
the other case can be proved in the same way. Then, we can write that
\begin{multline}
\label{MuDifferenceCase}
\frac{1}{n}\big\lvert \mu_{\bar{m},\bar{m}'}(\bar{s}_{\bar{m},\bar{m}'}) - \mu_{m,m'}(\bar{s}_{m,m'})\big\rvert \\= \frac{1}{n}\,\mu_{\bar{m},\bar{m}'}(\bar{s}_{\bar{m},\bar{m}'}) - \frac{1}{n}\,\mu_{m,m'}(\bar{s}_{m,m'})\,.
\end{multline}
Furthermore, thanks to \eqref{SupAlternative1} and \eqref{SupAlternative2}, we have two alternatives. If 
\begin{equation*}
\sup_{s \geq 0} \mu_{m,m'}(s) = \mu_{m,m'}(\bar{s}_{m,m'})\,,
\end{equation*}
then
\begin{equation}
\frac{1}{n}\,\mu_{m,m'}(\bar{s}_{m,m'}) \geq \frac{1}{n}\,\mu_{m,m'}(\bar{s}_{\bar{m},\bar{m}'})
\end{equation}
and we can bound \eqref{MuDifferenceCase} by
\begin{align}
\frac{1}{n}\,\mu_{\bar{m},\bar{m}'}&(\bar{s}_{\bar{m},\bar{m}'}) - \frac{1}{n}\,\mu_{m,m'}(\bar{s}_{m,m'}) \notag\\
	&\leq \frac{1}{n}\,\mu_{\bar{m},\bar{m}'}(\bar{s}_{\bar{m},\bar{m}'}) - \frac{1}{n}\,\mu_{m,m'}(\bar{s}_{\bar{m},\bar{m}'}) \\
	&\leq K \Delta(\hat{M},t)
\end{align}
as in \eqref{MuMuDifferenceBound}. If instead
\begin{equation*}
\sup_{s \geq 0} \mu_{m',m}(s) = \mu_{m',m}(\bar{s}_{m,m'})\,,
\end{equation*}
then
\begin{align}
\frac{1}{n}\,\mu_{m,m'}(\bar{s}_{m,m'}) &\geq \frac{1}{n}\,\mu_{m',m}(\bar{s}_{m,m'}) - K\Delta(\hat{M},t) \\
	&\geq \frac{1}{n}\,\mu_{m',m}(\bar{s}_{\bar{m},\bar{m}'}) - K\Delta(\hat{M},t) \\
	&\geq \frac{1}{n}\,\mu_{m,m'}(\bar{s}_{\bar{m},\bar{m}'}) - 2K\Delta(\hat{M},t)
\end{align}
using \eqref{MuEtaDifferenceBound} twice. Hence, we can bound \eqref{MuDifferenceCase} by
\begin{align}
\frac{1}{n}\,&\mu_{\bar{m},\bar{m}'}(\bar{s}_{\bar{m},\bar{m}'}) - \frac{1}{n}\,\mu_{m,m'}(\bar{s}_{m,m'}) \notag\\
	&\leq \frac{1}{n}\,\mu_{\bar{m},\bar{m}'}(\bar{s}_{\bar{m},\bar{m}'}) - \frac{1}{n}\,\mu_{m,m'}(\bar{s}_{\bar{m},\bar{m}'}) +2K\Delta(\hat{M},t)\\
	&\leq 3K \Delta(\hat{M},t)\,,
\end{align}
again as in \eqref{MuMuDifferenceBound}. Putting this and \eqref{MuMuDifferenceBound} into \eqref{TwoAbsBound} leads to \eqref{MuDifferenceBound}.
\end{proof}

Finally, thanks to this lemma, we can prove our upper bound on the reliability function at $R=0^+$, which coincides with the lower bound \eqref{ELower}.
\begin{thm}
\label{EBoundThm}
For any balanced pair,
\begin{align}
E^q(0^+) =& \notag\\
&\hspace{-4em}\max_{Q \in \mathcal{P}(\mathcal{X})} \sup_{s \geq 0} -\!\!\sum_{a \in \mathcal{X}}\sum_{b \in \mathcal{X}}Q(a) Q(b)\log\!\!\sum_{y \in \hat{\mathcal{Y}}_{a,b}} \!\!W(y|a) \biggl(\frac{q(b,y)}{q(a,y)}\biggr)^{\!\!s}.
\end{align}
\end{thm}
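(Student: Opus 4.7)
\textbf{The plan is} to upper bound $E^q(0^+)$ by the expurgated exponent, which matches \eqref{ELower} and proves the theorem. The starting point is \eqref{PeDMinBound}: any code $\mathcal{C}_n$ of rate at least $R$ and blocklength $n$ satisfies $-\frac{1}{n}\log P_e^{(n)} \leq D_{\min}(\mathcal{C}_n) + R + o(1)$, so it suffices to upper bound $D_{\min}(\mathcal{C}_n)$ for code sequences whose cardinality grows without bound. For any $R>0$, the optimal codes have $M \geq e^{nR}\to\infty$, so for any fixed positive integers $\hat{M}$ and $t$, Koml\'os's Theorem~\ref{KomlosThm} applies for all sufficiently large $n$ and produces a subcode $\hat{\mathcal{C}}_n\subset\mathcal{C}_n$ of size $\hat{M}$ whose pairwise joint types are uniformly $\Delta(\hat{M},t)$-close and approximately symmetric. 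Bounding the minimum by the average over $\hat{\mathcal{C}}_n$ as in \eqref{DBoundAverage}, the task reduces to controlling $\frac{1}{\hat{M}(\hat{M}-1)}\sum_{m\neq m'} D_{m,m'}^{(n)}$.

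Each term in this sum is handled via Lemma~\ref{SBarLemma}: for every pair in the subcode, $D_{m,m'}^{(n)} \leq \frac{1}{n}\mu_{m,m'}(\bar{s}_{m,m'}) + K\Delta(\hat{M},t)$ with $\bar{s}_{m,m'}\in[0,\hat{s}]$. The technical obstacle that the optimizers $\bar{s}_{m,m'}$ differ from pair to pair is overcome by fixing a canonical pair $(m_0,m_0')$, setting $s^* \triangleq \bar{s}_{m_0,m_0'}$, and invoking \eqref{MuDifferenceBound} to replace $\bar{s}_{m,m'}$ by $s^*$ in $\mu_{m,m'}$ at a further additive cost of $4K\Delta(\hat{M},t)$. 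This is precisely where the balanced-pair hypothesis enters: Lemma~\ref{SHatLemma} guarantees $\hat{s}<\infty$, and therefore $K$ is finite, so the slack from unifying the optimizers is uniformly controlled by $\Delta(\hat{M},t)$.

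The average is then evaluated in single-letter form. By \eqref{MuType} and the identity $\mu_{a,a}\equiv 0$ from Lemma~\ref{MuEtaProperties}, $\frac{1}{n}\mu_{m,m'}(s^*) = \sum_{a\neq b}P_{m,m'}(a,b)\mu_{a,b}(s^*)$; exchanging the summations and applying Lemma~\ref{PlotkinLemma} yields
\begin{equation*}
\frac{1}{\hat{M}(\hat{M}-1)}\sum_{m\neq m'}\frac{1}{n}\mu_{m,m'}(s^*) = \frac{\hat{M}}{\hat{M}-1}\cdot\frac{1}{n}\sum_{c=1}^{n}\sum_{a,b}Q_c(a)Q_c(b)\mu_{a,b}(s^*),
\end{equation*}
where $Q_c(a)\triangleq \hat{M}_c(a)/\hat{M}$ is the empirical distribution of the $c$-th coordinate across the subcode. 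Each inner double sum is bounded by $\max_{Q\in\mathcal{P}(\mathcal{X})}\sup_{s\geq 0}\sum_{a,b}Q(a)Q(b)\mu_{a,b}(s)$, which by \eqref{MuAB} is exactly the claimed expurgated expression; averaging over $c$ preserves this bound.

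Putting everything together, $D_{\min}(\mathcal{C}_n)$ is bounded above by $\frac{\hat{M}}{\hat{M}-1}$ times the expurgated expression plus $5K\Delta(\hat{M},t)$, for all $n$ large enough. Taking $\limsup_n$, then letting $\hat{M},t\to\infty$ so that $\Delta(\hat{M},t)\to 0$, and finally $R\to 0^+$, yields the desired upper bound, which together with \eqref{ELower} establishes equality. The main obstacle is the synchronization of the pair-dependent optimizers $\bar{s}_{m,m'}$ through a single common $s^*$; the balanced-pair definition is tailored precisely so that Lemma~\ref{SBarLemma} can achieve this synchronization with an error that vanishes as $\hat{M},t\to\infty$.
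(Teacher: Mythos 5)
Your proposal is correct and follows essentially the same route as the paper's proof: extract the Koml\'os subcode, bound the minimum of the $D_{m,m'}^{(n)}$ by their average, use Lemma~\ref{SBarLemma} to synchronize the pair-dependent optimizers at a single $s^*$ with $O(\Delta(\hat{M},t))$ slack, single-letterize via \eqref{MuType} and Lemma~\ref{PlotkinLemma}, and take the limits in the same order before matching the expurgated lower bound \eqref{ELower}. Your explicit remark that $\mu_{a,a}\equiv 0$ justifies discarding the diagonal terms before applying the Plotkin identity is a small point the paper leaves implicit, but it does not change the argument.
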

\begin{proof}
We already pointed out that for any subcode of $\mathcal{C}$, and in particular for the subcode $\hat{\mathcal{C}}$ of Theorem \ref{KomlosThm}, we have
\begin{equation}
\label{DBoundAverage2}
D_{\min}(\mathcal{C}) \leq D_{\min}(\hat{\mathcal{C}}) \leq \frac{1}{\hat{M}(\hat{M}-1)} \sum_{m \neq m'} D_{m,m'}^{(n)}
\end{equation}
with $m,m' \in \hat{\mathcal{C}}$. Then, we can bound the average as follows, similarly as what Shannon, Gallager and Berlekamp did in the maximum likelihood setting for the particular class of pairwise reversible channels \cite{sgb1}. Fix any pair of codewords $\hat{m} \neq \hat{m}' \in \hat{\mathcal{C}}$. Then, 

\begin{align}
D_{\min}&(\hat{\mathcal{C}}) \notag\\
	&\hspace{-1.5em}\leq \frac{1}{\hat{M}(\hat{M}-1)} \sum_{m \neq m'} D_{m,m'}^{(n)} \\
	&\hspace{-1.5em}\leq \frac{1}{\hat{M}(\hat{M}-1)} \sum_{m \neq m'} \frac{1}{n}\,\mu_{m,m'}(\bar{s}_{m,m'}) +K\Delta(\hat{M},t) \label{DMinUpperBound1}\\
	&\hspace{-1.5em}\leq \frac{1}{\hat{M}(\hat{M}-1)} \sum_{m \neq m'} \frac{1}{n}\,\mu_{m,m'}(\bar{s}_{\hat{m},\hat{m}'}) +5K\Delta(\hat{M},t) \label{DMinUpperBound2} \\
	&\hspace{-1.5em}= \frac{1}{\hat{M}(\hat{M}-1)} \sum_a \sum_b \sum_{m\neq m'} P_{m,m'}(a,b) \mu_{a,b}(\bar{s}_{\hat{m},\hat{m}'}) \notag\\
		&\hspace{13.5em}+5K\Delta(\hat{M},t)\label{DMinUpperBound25} \\
	&\hspace{-1.5em}= \frac{1}{n} \frac{1}{\hat{M}(\hat{M}-1)} \sum_{c=1}^n \sum_a \sum_b \hat{M}_c(a) \hat{M}_c(b) \mu_{a,b}(\bar{s}_{\hat{m},\hat{m}'}) \notag\\
		&\hspace{13.5em}+5K\Delta(\hat{M},t) \label{DMinUpperBound3}\\
	&\hspace{-1.5em}= \frac{1}{n} \frac{\hat{M}}{\hat{M}-1} \sum_{c=1}^n \sum_a \sum_b \frac{\hat{M}_c(a)}{\hat{M}} \frac{\hat{M}_c(b)}{\hat{M}} \mu_{a,b}(\bar{s}_{\hat{m},\hat{m}'}) \notag\\
		&\hspace{13.5em}+5K\Delta(\hat{M},t)
\end{align}
\begin{align}
	&\leq \frac{\hat{M}}{\hat{M}-1} \max_{Q \in \mathcal{P}(\mathcal{X})} \sum_a \sum_b Q(a) Q(b) \mu_{a,b}(\bar{s}_{\hat{m},\hat{m}'}) \notag\\
		&\hspace{15em}+5K\Delta(\hat{M},t) \label{DMinUpperBound4}\\
	&\leq \frac{\hat{M}}{\hat{M}-1} \sup_{s \geq 0} \max_{Q \in \mathcal{P}(\mathcal{X})} \sum_a \sum_b Q(a) Q(b) \mu_{a,b}(s) \notag\\
		&\hspace{15em}+5K\Delta(\hat{M},t)\,, \label{DMinUpperBoundLast}
\end{align}
where \eqref{DMinUpperBound1} is due to \eqref{DMuBound}, \eqref{DMinUpperBound2} is due to \eqref{MuDifferenceBound}, \eqref{DMinUpperBound3} is due to Lemma \ref{PlotkinLemma}, and \eqref{DMinUpperBound4} is due to the fact that for every $c$, 
\begin{equation*}
\bigg\{\frac{\hat{M}_c(a)}{\hat{M}}, \quad a \in \mathcal{X}\bigg\}
\end{equation*}
is a probability distribution over $\mathcal{X}$. As we already underlined, these steps are possible thanks to the fact that all pairs of codewords in $\hat{\mathcal{C}}$ have joint types that are both symmetrical and close to each other, and that this combined with the fact that for all balanced pairs we can focus the attention only on the $s$ in a known bounded interval, all the $D_{m,m'}^{(n)}$ that appear in the average \eqref{DBoundAverage2} are close to each other.
Then, letting $M \to \infty$ (so that we may also let $\hat{M} \to \infty$, by Theorem \ref{KomlosThm}) and $t \to \infty$ we obtain, thanks to the fact that $\Delta(\hat{M},t) \to 0$ in \eqref{DMinUpperBoundLast},
\begin{equation}
D_{\min}(\mathcal{C}) \leq \sup_{s \geq 0} \max_{Q \in \mathcal{P}(\mathcal{X})} \sum_a \sum_b Q(a) Q(b) \mu_{a,b}(s)\,,
\end{equation}
which is independent of the code $\mathcal{C}$.
Finally, thanks to equation \eqref{PeDMinBound}, since we let $R \to 0$ \emph{after} $n \to \infty$, we obtain an upper bound on the reliability function at $R=0^+$:
\begin{align}
E^q(0^+) &\leq\notag\\
&\hspace{-3.5em} \max_{Q \in \mathcal{P}(\mathcal{X})} \sup_{s \geq 0}-\sum_a \sum_b Q(a) Q(b) \log\!\!\sum_{y \in \hat{\mathcal{Y}}_{a,b}} \!\!W(y|a) \biggl(\frac{q(b,y)}{q(a,y)}\biggr)^{\!\!s},
\end{align}
which equals the expurgated lower bound given by \eqref{ELower}, proving the theorem.
\end{proof}

If the channel and decoding metric are not a balanced pair, our method fails in that for some pair $(a,b) \in \mathcal{X}^2$ belonging to the set $\mathcal{B}$ defined in \eqref{BSet}, the function $\mu_{a,b}(s) + \mu_{b,a}(s)$ is concave and has a horizontal asymptote at $s \to +\infty$, but it is not a straight line; because of this, a finite $\hat{s}$ as in Lemma \ref{SHatLemma} cannot be determined. A partial solution to this problem is to upper-bound these functions by their horizontal asymptote. This strategy leads to a similar upper bound as the one for balanced pairs; however, in this case, the bound is larger than the expurgated bound at $R=0^+$. To obtain this bound, define for any pair $(a,b) \in \mathcal{B}$,
\begin{equation}
A(a,b) \triangleq \min_{y: W(y|a)>0} \frac{q(a,y)}{q(b,y)} = \max_{y: W(y|b) > 0} \frac{q(a,y)}{q(b,y)}\,,
\end{equation}
and let
\begin{equation}
\label{YASet}
\hat{\mathcal{Y}}_{a,b}^A \triangleq \bigg\{y \in \hat{\mathcal{Y}}_{a,b} : \frac{q(a,y)}{q(b,y)} = A(a,b)\bigg\}\,.
\end{equation}
Then, we can upper-bound $\mu_{a,b}(s)$ and $\mu_{b,a}(s)$ as follows.
\begin{align}
\mu_{a,b}(s) &\triangleq -\log \sum_{y \in \hat{\mathcal{Y}}_{a,b}} W(y|a) \bigg(\frac{q(b,y)}{q(a,y)}\bigg)^s \\
	&\leq -\log \sum_{y \in \hat{\mathcal{Y}}_{a,b}^A} W(y|a) A(a,b)^{-s} \\
	&= s \log A(a,b) - \log \sum_{y \in \hat{\mathcal{Y}}_{a,b}^A} W(y|a)
\end{align}
and in the same way,
\begin{equation}
\mu_{b,a}(s) \leq -s \log A(a,b) - \log \sum_{y \in \hat{\mathcal{Y}}_{a,b}^A} W(y|b).
\end{equation}
Now, if we define 
\begin{gather}
\hat{\mu}_{a,b}(s) \triangleq s \log A(a,b) - \log \sum_{y \in \hat{\mathcal{Y}}_{a,b}^A} W(y|a) \\
\hat{\mu}_{b,a}(s) \triangleq -s \log A(a,b) - \log \sum_{y \in \hat{\mathcal{Y}}_{a,b}^A} W(y|b)
\end{gather}
we have $\mu_{a,b}(s) \leq \hat{\mu}_{a,b}(s)$ and $\mu_{b,a}(s) \leq \hat{\mu}_{b,a}(s)$, and
\begin{equation}
\hat{\mu}_{a,b}(s) + \hat{\mu}_{b,a}(s) = - \log \sum_{y \in \hat{\mathcal{Y}}_{a,b}^A} W(y|a) - \log \sum_{y \in \hat{\mathcal{Y}}_{a,b}^A} W(y|b)
\end{equation}
which is constant. Finally, if we set $\hat{\mu}_{a,b}(s) \triangleq \mu_{a,b}(s)$ for all pairs $(a,b) \in \mathcal{B}^c$, one can readily check that Lemma \ref{SHatLemma}, Lemma \ref{SBarLemma} and Theorem \ref{EBoundThm} (the upper bound part) still hold for any discrete memoryless channel and decoding metric if the $\mu_{a,b}(s)$ are replaced with $\hat{\mu}_{a,b}(s)$. Hence, for a generic pair of channel and decoding metric, the following theorem can be proved.

\begin{thm}
\begin{sloppypar}
For any discrete memoryless channel and decoding metric with ${\bar{C}_0=0}$,
\end{sloppypar}
\begin{equation}
\label{EUpper2}
E^q(0^+) \leq \max_{Q \in \mathcal{P}(\mathcal{X})} \sup_{s \geq 0} \sum_a \sum_b Q(a) Q(b) \hat{\mu}_{a,b}(s) \triangleq E^q_{\mathrm{up}}(0^+).
\end{equation}
\end{thm}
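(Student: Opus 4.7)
The plan is to mirror the proof of Theorem \ref{EBoundThm} with the modified functions $\hat{\mu}_{a,b}(s)$ replacing $\mu_{a,b}(s)$, exploiting the fact that $\hat{\mu}$ majorizes $\mu$ pointwise while retaining the structural properties needed by the earlier lemmas. First I would extend $\hat{\mu}$ to pairs of codewords via
\[
\hat{\mu}_{\boldsymbol{x}_1,\boldsymbol{x}_2}(s) \triangleq n \sum_{a,b} P_{1,2}(a,b)\, \hat{\mu}_{a,b}(s)
\]
and observe that the pointwise inequality $\mu_{a,b}(s) \leq \hat{\mu}_{a,b}(s)$ immediately lifts to $\mu_{\boldsymbol{x}_1,\boldsymbol{x}_2}(s) \leq \hat{\mu}_{\boldsymbol{x}_1,\boldsymbol{x}_2}(s)$, so $D_{m,m'}^{(n)} \leq \hat{D}_{m,m'}^{(n)}$ where $\hat{D}$ is obtained by replacing $\mu$ with $\hat{\mu}$ in \eqref{DMuEta}. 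Consequently $D_{\min}(\mathcal{C}) \leq \hat{D}_{\min}(\mathcal{C})$, and it suffices to upper-bound the latter.

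Second, I would verify the analogues of Lemmas \ref{SHatLemma} and \ref{SBarLemma} for $\hat{\mu}$. Concavity of each $\hat{\mu}_{a,b}(s)$ is immediate: it is a straight line on $\mathcal{B}$ and coincides with the concave $\mu_{a,b}(s)$ on $\mathcal{B}^c$. The crux is that $\hat{\mu}_{a,b}(s) + \hat{\mu}_{b,a}(s)$ now attains its supremum at a \emph{finite} value of $s$ for every $(a,b)$: on $\mathcal{B}$ the two linear contributions have opposite slopes $\pm \log A(a,b)$, so the sum is constant and the argsup can be taken to be $0$; on $\mathcal{B}^c$, the hypothesis $\bar{C}_0^q = 0$ combined with $(a,b) \notin \mathcal{B}$ forces $\mu'_{a,b} + \mu'_{b,a} < 0$ through Theorem \ref{C0Thm} and \eqref{MuDer}, hence by concavity $\mu_{a,b}(s) + \mu_{b,a}(s)$ achieves a finite maximum at some finite $\hat{s}_{a,b}$. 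Setting $\hat{s} := \max_{a,b} \hat{s}_{a,b} < \infty$ and redoing the argument of Lemma \ref{SHatLemma} coordinate-wise via \eqref{MuType} gives $\argsup_{s \geq 0}(\hat{\mu}_{m,m'}(s) + \hat{\mu}_{m',m}(s)) \in [0, \hat{s}]$; Lemma \ref{SBarLemma} then goes through verbatim with $\hat{\mu}$ in place of $\mu$ and a redefined $\hat{K} := \max_{0 \leq s \leq \hat{s}} \sum_{a,b} \lvert\hat{\mu}_{a,b}(s)\rvert$.

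Third, with these analogues in hand, I would rerun the chain \eqref{DMinUpperBound1}--\eqref{DMinUpperBoundLast} with $\hat{\mu}$ replacing $\mu$, still invoking Theorem \ref{KomlosThm} to extract the symmetric subcode $\hat{\mathcal{C}}$ and Lemma \ref{PlotkinLemma} to turn the joint-type average into a product of column marginals. The outcome is
\[
\hat{D}_{\min}(\mathcal{C}) \leq \frac{\hat{M}}{\hat{M}-1}\sup_{s \geq 0}\max_{Q \in \mathcal{P}(\mathcal{X})}\sum_{a,b} Q(a)Q(b)\, \hat{\mu}_{a,b}(s) + 5\hat{K}\Delta(\hat{M},t),
\]
and combining this with $D_{\min}(\mathcal{C}) \leq \hat{D}_{\min}(\mathcal{C})$, \eqref{PeDMinBound}, sending $\hat{M}, t \to \infty$ so that $\Delta \to 0$, then $n \to \infty$, and finally $R \to 0$, delivers \eqref{EUpper2}.

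The step I expect to require the most care is the finiteness of $\hat{s}$ on $\mathcal{B}$: this is precisely the pathology that makes the balanced-pair proof fail for general pairs, since the original $\mu_{a,b}(s) + \mu_{b,a}(s)$ only approaches its horizontal asymptote as $s \to +\infty$ there. The role of the straight-line majorization $\hat{\mu}$ is exactly to remove this obstruction: it preserves $\mu_{a,b}(s) \leq \hat{\mu}_{a,b}(s)$ (keeping the direction of the $D_{\min}$ comparison intact) while making the pairwise sum constant on $\mathcal{B}$. Once that finiteness is secured, the rest of the proof is a direct transcription of the balanced-pair argument, at the cost of a bound looser than \eqref{ELower}.
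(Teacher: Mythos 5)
Your proposal is correct and follows exactly the route the paper takes: the paper's own argument consists precisely of majorizing $\mu_{a,b}$ by the linear $\hat{\mu}_{a,b}$ on $\mathcal{B}$ (leaving it unchanged on $\mathcal{B}^c$, where $\bar{C}_0^q=0$ already forces $\mu'_{a,b}+\mu'_{b,a}<0$ and hence a finite maximizer) and then observing that Lemmas \ref{SHatLemma} and \ref{SBarLemma} and the chain \eqref{DMinUpperBound1}--\eqref{DMinUpperBoundLast} go through verbatim with $\hat{\mu}$ in place of $\mu$. You have in fact supplied the details that the paper leaves to the reader under ``one can readily check,'' including the key monotonicity $D_{m,m'}^{(n)} \leq \hat{D}_{m,m'}^{(n)}$ and the finiteness of $\hat{s}$.
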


In such a case, the maximum distance between the expurgated lower bound and our upper bound on $E^q(0^+)$ can be estimated as follows:
\begin{align}
\big\lvert E^q_{\text{up}}&(0^+) - E^q_{\text{ex}}(0^+)\big\rvert \\
	&\leq \frac{1}{2}\max_{Q \in \mathcal{P}(\mathcal{X})} \sup_{s \geq 0} \sum_a \sum_b Q(a)Q(b) \notag\\
	&\hspace{3.5em}\big\lvert \hat{\mu}_{a,b}(s) + \hat{\mu}_{b,a}(s) - \mu_{a,b}(s) - \mu_{b,a}(s)\big)\big\rvert \\
	&= \frac{1}{2}\max_{Q \in \mathcal{P}(\mathcal{X})} \sup_{s \geq 0} \sum_{(a,b) \in \mathcal{B}} Q(a)Q(b) \notag\\
	&\hspace{3.5em}\big\lvert \hat{\mu}_{a,b}(s) + \hat{\mu}_{b,a}(s) - \mu_{a,b}(s) - \mu_{b,a}(s)\big)\big\rvert
\end{align}
\begin{align}
\hspace{1em}&= \frac{1}{2}\max_{Q \in \mathcal{P}(\mathcal{X})} \sum_{(a,b) \in \mathcal{B}} Q(a)Q(b) \notag\\
	&\hspace{3.5em}\big\lvert \hat{\mu}_{a,b}(0) + \hat{\mu}_{b,a}(0) - \mu_{a,b}(0) - \mu_{b,a}(0)\big)\big\rvert \\
	&\leq \frac{1}{2}\max_{(a,b) \in \mathcal{B}}\big\lvert \hat{\mu}_{a,b}(0) + \hat{\mu}_{b,a}(0) - \mu_{a,b}(0) - \mu_{b,a}(0)\big)\big\rvert \\
	&= \frac{1}{2}\max_{(a,b) \in \mathcal{B}} \Bigg(\log \frac{\sum_{y \in \hat{\mathcal{Y}}_{a,b}} W(y|a)}{\sum_{y \in \hat{\mathcal{Y}}_{a,b}^A} W(y|a)}\notag \\
	&\hspace{9.5em}+ \log \frac{\sum_{y \in \hat{\mathcal{Y}}_{a,b}} W(y|b)}{\sum_{y \in \hat{\mathcal{Y}}_{a,b}^A} W(y|b)}\Bigg). \label{NonBalancedErrBound}
\end{align}
Notice that for balanced pairs, definitions \eqref{BalancedCond} and \eqref{YASet} show that the sets $\hat{\mathcal{Y}}_{a,b}$ and $\hat{\mathcal{Y}}_{a,b}^A$ are equal, and therefore the quantity in \eqref{NonBalancedErrBound} is zero, as expected.
\begin{ex}
Consider the non-balanced channel-decoding metric pair of Example 1. For the pair of inputs $(0,1)$ one has $\hat{\mathcal{Y}}_{a,b} = \{0,1\} \neq \hat{\mathcal{Y}}_{a,b}^A = \{1\}$. Therefore, the upper bound on the gap $\big\lvert E^q_{\text{up}}(0^+) - E^q_{\text{ex}}(0^+)\big\rvert$ in equation \eqref{NonBalancedErrBound} for this channel-decoding pair is equal to
\begin{equation}
\big\lvert E^q_{\text{up}}(0^+) - E^q_{\text{ex}}(0^+)\big\rvert \leq \frac{1}{2}\bigg(\log \frac{1}{\varepsilon} + \log\frac{1-\varepsilon}{1-\varepsilon}\bigg) = \frac{1}{2}\log\frac{1}{\varepsilon}.
\end{equation}
\end{ex}

\begin{rem}
Converse bounds for codes at rate $R=0$ can often be used to also deduce bounds at $R>0$ through appropriate code coverings. Our bound, as most of the bounds on zero-rate codes, is based on the Plotkin double counting trick of Lemma \ref{PlotkinLemma}, which is used in \eqref{DMinUpperBound3}. In the same way as the Plotkin bound on the minimum Hamming distance of codes at $R=0$ can be extended to $R>0$ to deduce the Singleton bound and the Elias bound, our result can also be applied to derive bounds at $R>0$. For maximum likelihood decoding, the idea was initially presented by Blahut, although with a technical error in the proof which can however be corrected by means of the Ramsey-theoretic procedure also used here (see \cite{bondaschi1}). A similar extension can be derived for mismatched decoding. We do not expand on this point here since it would in large part repeat the discussion in \cite{bondaschi1} while at the same time requiring a significant technical digression on constant composition codes and the maximization over $s$, which would take us far from the main focus of this paper.
\end{rem}

%
%
%
%
%
%

\ifCLASSOPTIONcaptionsoff
  \newpage
\fi

\begin{IEEEbiographynophoto}{Marco Bondaschi}
is a PhD student in the Laboratory for Information in Networked Systems at \'{E}cole Polytechnique F\'{e}d\'{e}rale de Lausanne (EPFL), Switzerland. He received the Bachelor's degree in Electronics Engineering and the Master's degree in Communication Sciences and Multimedia from the University of Brescia in 2017 and 2019 respectively. His main research interests are in information theory and learning theory.
\end{IEEEbiographynophoto}
\vfill
\newpage
\begin{IEEEbiographynophoto}{Albert Guill\'en i F\`abregas}
(S--01, M--05, SM--09, F--22) received the Telecommunications Engineering Degree and
the Electronics Engineering Degree from Universitat Polit\`ecnica de
Catalunya and Politecnico di Torino, respectively in 1999, and the Ph.D.
in Communication Systems from \'Ecole Polytechnique F\'ed\'erale de
Lausanne (EPFL) in 2004.

In 2020, he returned to a full-time faculty position at the Department
of Engineering, University of Cambridge, where he had been a full-time
faculty and Fellow of Trinity Hall from 2007 to 2012. Since 2011 he has
been an ICREA Research Professor at Universitat Pompeu Fabra (currently
on leave), where he is now an adjunct researcher. He has held appointments at the New Jersey Institute of
Technology, Telecom Italia, European Space Agency (ESA), Institut
Eur\'ecom, University of South Australia, Universitat Pompeu Fabra, University of Cambridge, as
well as visiting appointments at EPFL, \'Ecole Nationale des
T\'el\'ecommunications (Paris), Universitat Pompeu Fabra, University of
South Australia, Centrum Wiskunde \& Informatica and Texas A\&M
University in Qatar. His specific research interests are in the areas of
information theory, communication theory, coding theory, statistical inference.

Dr. Guill\'en i F\`abregas is a Member of the Young Academy of Europe,
and received the Starting and Consolidator Grants from the European
Research Council, the Young Authors Award of the 2004 European Signal
Processing Conference, the 2004 Best Doctoral Thesis Award from the
Spanish Institution of Telecommunications Engineers, and a Research
Fellowship of the Spanish Government to join ESA. Since 2013 he has been an Editor of the
Foundations and Trends in Communications and Information Theory, Now
Publishers and was an Associate Editor of the \sc{IEEE Transactions on
Information Theory} (2013--2020) and \sc{IEEE Transactions on Wireless
Communications} (2007--2011).
\end{IEEEbiographynophoto}
\begin{IEEEbiographynophoto}{Marco Dalai} 
(S'05-A'06-M'11-SM'17) received the degree in Electronic Engineering (cum laude) and the PhD in Information Engineering in 2003 and 2007 respectively from the University of Brescia Italy, where he is now an associate professor with the Department of Information Engineering. He is a member of the IEEE Information Theory Society, recipient of the 2014 IEEE Information Theory Society Paper Award and currently an Associate Editor of the IEEE Transactions on Information Theory.
\end{IEEEbiographynophoto}
\vfill

\begin{thebibliography}{99}
\bibitem{csiszar1}
I. Csisz\'{a}r and P. Narayan, ``Channel capacity for a given decoding metric,'' \textit{IEEE Trans. Inf. Theory}, vol. 45, no. 1, pp. 35-43, 1995.
\bibitem{merhav1}
N. Merhav, G. Kaplan, A. Lapidoth, and S. Shamai, ``On information rates for mismatched decoders,'' \textit{IEEE Trans. Inf. Theory}, vol. 40, no. 6, pp. 1953–1967, Nov. 1994.
\bibitem{scarlett1}
J. Scarlett, A. Guillén i Fàbregas, A. Somekh-Baruch and A. Martinez, ``Information-theoretic foundations of mismatched decoding,'' \textit{Found. Trends Commun. Inf. Theory}, vol. 17, no. 2–3, pp. 149-401, 2020.
\bibitem{fischer1}
T. R. M. Fischer, ``Some remarks on the role of inaccuracy in Shannon’s theory of information transmission,'' \textit{Trans. 8th Prague Conf. Inf. Theory}, pp. 211–226, 1978.
\bibitem{kaplan1}
G. Kaplan and S. Shamai, ``Information rates and error exponents of compound channels with application to antipodal signaling in a fading environment,'' \textit{Arch. Elek. Uber.}, vol. 47, no. 4, pp. 228–239, 1993.
\bibitem{somekh1}
A. Somekh-Baruch, J. Scarlett, and A. Guillén i Fàbregas, ``Generalized random Gilbert-Varshamov codes,'' \textit{IEEE Trans. Inf. Theory}, vol. 65, no. 6, pp. 3452–3469, 2019.
\bibitem{kangarshahi1}
E. Asadi Kangarshahi and A. Guillén i Fàbregas, ``A single-letter upper bound to the mismatch capacity,'' \textit{IEEE Trans. Inf. Theory}, vol. 67, no. 4, pp. 2013-2033, 2021.
\bibitem{somekh2}
A. Somekh-Baruch, ``A single-letter upper bound on the mismatch capacity via a multicasting approach,'' \textit{Proc. 2020 IEEE Inf. Theory Workshop}, 2021. Available online: https://arxiv.org/pdf/2007.14322.pdf
\bibitem{kangarshahi2}
E. Asadi Kangarshahi and A. Guillén i Fàbregas, ``A sphere-packing exponent for mismatched decoding,'' \textit{Proc. 2021 IEEE Int. Symp. Inf. Theory}, to appear.
\bibitem{scarlett2}
J. Scarlett, L. Peng, N. Merhav, A. Martinez and A. Guillén i Fàbregas, ``Expurgated random-coding ensembles: exponents, refinements, and connections,'' \textit{IEEE Trans. Inf. Theory}, vol. 60, no. 8, pp. 4449-4462, 2014.
\bibitem{csiszar2}
I. Csisz\'{a}r and J. K\"{o}rner, \textit{Information theory: coding theorems for discrete memoryless systems}. Cambridge University Press, 2011.
\bibitem{cover1}
T. M. Cover and J. A. Thomas, \textit{Elements of information theory}. John Wiley \& Sons, 2012.
\bibitem{shannon1}
C. E. Shannon, ``The zero error capacity of a noisy channel,'' \textit{IEEE Trans. Inf. Theory}, vol. 2, no. 3, pp. 8-19, 1956.
\bibitem{komlos1}
J. Koml\'{o}s, ``A strange pigeon-hole principle,'' \textit{Order}, vol. 7, no. 2, pp. 107-113, 1990.
\bibitem{blinovsky1}
V. M. Blinovsky, ``New Approach to Estimation of the Decoding Error Probability,'' \textit{Prob. Inf. Trans.}, vol. 38, no. 1, pp. 16-19, 2002.
\bibitem{blinovsky2}
V. M. Blinovsky, ``Code bounds for multiple packings over a nonbinary finite alphabet,'' \textit{Prob. Inf. Trans.}, vol. 41, no. 1, pp. 23-32, 2005.
\bibitem{ahlswede1}
R. Ahlswede, V. Blinovsky, ``Multiple packing in sum-type metric spaces,'' \textit{Discrete Applied Mathematics}, vol. 156, no. 9, pp. 1469-1477, 2008.
\bibitem{diestel1}
R. Diestel, \textit{Graph Theory}, Springer-Verlag Berlin Heidelberg, 2017.
\bibitem{bondaschi1}
M. Bondaschi and M. Dalai, ``A Revisitation of Low-Rate Bounds on the Reliability Function of Discrete Memoryless Channels for List Decoding,'' \textit{IEEE Trans. Inf. Theory}, submitted.
\bibitem{sgb1}
C. E. Shannon, R. G. Gallager, and E. R. Berlekamp, ``Lower bounds to error probability for coding on discrete memoryless channels. II,'' \textit{Information and Control}, vol. 10, no. 5, pp. 522-552, 1967.
\end{thebibliography}
\end{document}